\newtheorem{theorem}{Theorem}[section]
\newtheorem{lemma}[theorem]{Lemma}
\newtheorem{definition}[theorem]{Definition}
\newtheorem{remark}[theorem]{Remark}
\def\ca{\mathcal{ A}}
\def\ch{{\mathcal H}}
\def\ck{{\mathcal K}}
\def\cm{{\mathcal M}}
\def\bh{{\mathbb H}}
\def\bn{{\mathbb N}}
\def\bp{{\mathbb P}}
\def\bz{{\mathbb Z}}
\def\frak{\mathfrak}
\def\ga{{\frak A}}
\def\f{\varphi}
\def\r{\rho}
\def\O{\Omega}
\def\tr{{\rm Tr}}
\def\L{\Lambda}
\def\G{\Gamma}
\def\ffi{\varphi}
\def\Tr{\mathrm{Tr}}
\def\<{\langle}
\def\>{\rangle}
\def\1{\mathbf{1}}
\def\ve{\varepsilon}
\def\LL{\Lambda}
\def\o{\otimes}
\def\bh{\mathbf{h}}
\def\bs{\mathbf{s}}
\def\id{{\bf 1}\!\!{\rm I}}
\begin{document}

\begin{center}
{\Large {\bf Open Quantum Random Walks and Quantum Markov
chains on Trees I: Phase transitions }}\\[1cm]

{\sc Farrukh Mukhamedov} \\[2mm]

 Department of Mathematical Sciences, College of Science, \\
 United Arab Emirates University 15551, Al-Ain,\\ United
Arab Emirates\, and \\
Institute of Mathematics named after V.I.Romanovski, 4,\\
University str., 100125, Tashkent, Uzbekistan\\
e-mail: {\tt far75m@gmail.com; farrukh.m@uaeu.ac.ae}\\[1cm]

{\sc Abdessatar Souissi} \\[2mm]

Department of Accounting, College of Business Management\\
Qassim University, Ar Rass, Saudi Arabia \, and \\
Preparatory institute for scientific and technical studies,\\
 Carthage University, Amilcar 1054, Tunisia\\
  e-mail: {{\tt a.souaissi@qu.edu.sa}; {\tt
 abdessattar.souissi@ipest.rnu.tn}}\\[1cm]

 {\sc Tarek Hamdi} \\[2mm]

Department of Management Information Systems, College of Business Management\\
Qassim University, Ar Rass, Saudi Arabia \, and \\
Laboratoire d'Analyse Math\'ematiques et applications LR11ES11 \\
Universit\'e de Tunis El-Manar, Tunisia\\
  e-mail: {{\tt t.hamdi@qu.edu.sa}}\\[1cm]
\end{center}

\small
\begin{center}
{\bf Abstract}\\
\end{center}
In the present paper, we construct QMC (Quantum Markov Chains)
associated with Open Quantum Random Walks such that the transition
operator of the chain is defined by OQRW and the restriction of QMC
to the commutative subalgebra coincides with the distribution
$\bp_\r$ of OQRW. However, we are going to look at the probability distribution as a Markov field over the Cayley tree.
Such kind of consideration allows us to investigated phase transition phenomena associated for OQRW within QMC scheme.
Furthermore, we first propose a new construction of QMC on trees, which is an extension of QMC considered in Ref. \cite{AOM}. Using such a construction, we are able to construct QMCs on tress associated with OQRW. Our investigation leads to the detection of the phase transition phenomena within the proposed scheme. This kind of phenomena appears first time in this direction. Moreover, mean entropies of QMCs are calculated.

\vskip 0.3cm \noindent {\it Mathematics Subject
           Classification}: 46L53, 46L60, 82B10, 81Q10.\\
        {\it Key words}:  Open quantum random walks; Quantum Markov chain; Cayley tree; disordered phase; phase transition

\normalsize

\section{Introduction}

Discovering the aspects of quantum mechanics, such as
superposition and interference, has lead to the idea of
quantum walks; a generalization of classical random walks \cite{D,Ke,KOW11,M}.
Recently, in \cite{CL08} a quantum phase transition has been explored by means of quantum walks in an optical lattice.
On the other hand, in \cite{Ka1} it has been shown that discrete-time quantum walks (QW)
can realise topological phases in 1D and 2D for all the symmetry classes
of free-fermion systems. In particular, they provide the QW protocols that simulate
representatives of all topological phases, featured by the presence of robust symmetry-
protected edge states \cite{Ka2}. In general, QW realisations are particularly useful,
because, in addition to the simplicity of their mathematical description, the parameters
that define them can be easily controlled in the lab.

Over the past decade, motivated
largely by the prospect of superefficient algorithms, the theory of quantum Markov
chains (QMC), especially in the guise of quantum walks, has generated a huge number of works,
including many discoveries of fundamental importance \cite{AW87,DM19,Ke,Kum,portugal}.
In \cite{Fing} a
novel approach  has been proposed to investigate quantum cryptography problems by means of QMC \cite{Gud}, where quantum effects are entirely encoded into super-operators labelling transitions, and
the nodes of its transition graph carry only classical information and thus they are discrete. QMC have been applied  \cite{AW87,DK19,DM19,DM191} to the investigations of so-called "open quantum random walks" (OQRW) \cite{attal,carbone,carbone2,konno,cfrr}.
We notice that OQRW are related to the study of asymptotic behavior of trace-preserving completely
positive maps, which belong to fundamental
topics of quantum information theory ( see for instance
\cite{burgarth2,petulante,novotny}).

For the sake of clarity, let us recall some necessary information about OQRW.
 Let $\mathcal{K}$ denote a separable Hilbert space and let
$\{|i\rangle\}_{i\in \LL}$ be its orthonormal basis indexed by the
vertices of some graph $\LL$ (here the set $\LL$ of vertices might
be finite or countable). Let $\mathcal{H}$ be another Hilbert space,
which will describe the degrees of freedom given at each point of
$\LL$. Then we will consider the space
$\mathcal{H}\otimes\mathcal{K}$. For each pair $i,j$ one associates
a bounded linear operator $B_{j}^i$ on $\mathcal{H}$. This operator
describes the effect of passing from $|j\rangle$ to $|i\rangle$. We
will assume that for each $j$, one has
\begin{equation}\label{Bij}
\sum_i B_{j}^{i*}B_{j}^i=\id,
\end{equation} where, if infinite,
such series is strongly convergent. This constraint means: the sum
of all the effects leaving site $j$ is $\id$. The operators $B^i_j$
act on $\mathcal{H}$ only, we dilate them as operators on
$\mathcal{H}\otimes\mathcal{K}$ by putting
$$
M^i_j=B^i_j\otimes \vert i\rangle\langle j\vert\,.
$$
The operator $M^i_j$ encodes exactly the idea that while passing
from $\vert j\rangle$ to $\vert i\rangle$ on the lattice, the effect
is the operator $B^i_j$ on $\mathcal{H}$.

According to \cite{attal} one has
\begin{equation}\label{sumMij=1}
\sum_{i,j} {M^i_j}^* M^i_j=\id.
\end{equation}

Therefore, the operators $(M^i_j)_{i,j}$ define a completely
positive mapping
\begin{equation}\label{MM}
\cm(\r)=\sum_i\sum_j M^i_j\,\r\, {M^i_j}^*
\end{equation}
on $\ch\otimes\ck$.

In what follows, we consider density matrices on
$\mathcal{H}\otimes\mathcal{K}$ which take the form
\begin{equation}\label{rr}
\rho=\sum_i\rho_i\otimes |i\rangle\langle i|,
\end{equation}
 assuming that $\sum_i\tr(\rho_i)=1$.

For a given initial state of such form, the \textit{Open Quantum
Random Walk (OQRW)} is defined by the mapping $\cm$, which has the
following form
\begin{equation}\label{MM1}
\cm(\rho)=\sum_i\Big(\sum_j B_{j}^i\rho_j B_{j}^{i*}\Big)\otimes
|i\rangle\langle i|.
\end{equation}

By means of the map $\cm$ one defines a family of classical random
process on $\O=\L^{\bz_+}$. Namely, for any density operator $\r$ on
$\mathcal{H}\otimes\mathcal{K}$ (see \eqref{rr}) the probability
distribution is defined by
\begin{equation}\label{Prb}
\bp_\r(i_0,i_1,\dots,i_n)=\tr(B^{i_n}_{i_{n-1}}\cdots
B^{i_2}_{i_{1}}B^{i_1}_{i_{0}}\rho_{i_0}B^{i_1*}_{i_{0}}B^{i_2*}_{i_{1}}\cdots
B^{i_n*}_{i_{n-1}}).
\end{equation}
We point out that this distribution is not a Markov measure
\cite{BBP}.

On the other hand, it is well-known \cite{AW87,norris} that to each
classical random walk one can associate a certain Markov chain and
some properties of the walk can be explored by the constructed
chain. Therefore, it is natural to construct a Quantum Markov chain (QMC)
associated with OQRW and investigate its properties.

Recently, in \cite{DK19,DM19}, we have found a quantum Markov
chain \footnote{We note that a Quantum Markov Chain is a
quantum generalization of a Classical Markov Chain where the state
space is a Hilbert space, and the transition probability matrix of a
Markov chain is replaced by a transition amplitude matrix, which
describes the mathematical formalism of the discrete time evolution
of open quantum systems, see
\cite{[Ac74f]}-\cite{[AcFr80]},\cite{fannes2,FM,nacht} for more details.}
(or finitely correlated state \cite{fannes2}) $\f$ on the
algebra $\ca=\otimes_{i\in\bz_+}\ca_i$, where $\ca_i$ is isomorphic to
$B(\ch)\otimes B(\ck)$, $i\in\bz_+$, such that the transition operator
$P$ equals to the mapping
$\cm^*$\footnote{The dual of $\cm$ is defined by the equality
$\tr(\cm(\r)x)=\tr(\r\cm^*(x))$ for all density operators $\rho$ and
observables $x$.} and the restriction of $\f$ to the commutative
subalgebra of $\ca$ coincides with the distribution $\bp_\r$, i.e.
\begin{equation}\label{I1}
\f\big((\id\o|i_0><i_0|)\o\cdots\o(\id\o|i_n><i_n|)\big)=\bp_\r(i_0,i_1,\dots,i_n).
\end{equation}
Hence, this result allows us  to
interpret the distribution $\bp_\r$ as a QMC, and to study further
properties of $\bp_\r$.

In the present paper, we initiate to look at the probability distribution \eqref{Prb} as a Markov field over the Cayley tree $\G^k$ \cite{D}.
Roughly speaking, $(i_0,i_1,\dots,i_n)$ is considered as a configuration on $\O=\L^{\G^k}$. Such kind of consideration allows us to investigated phase transition phenomena associated for OQRW within QMC scheme \cite{MBS161,MBSG20}. We stress that, in physics, a spacial classes of QMC, called "Matrix Product States" (MPS) and more generally "Tensor Network States" \cite{CV,Or} were used to investigate
quantum phase transitions for several lattice models.
This method uses the density matrix renormalization group (DMRG) algorithm which opened a new way of performing
the renormalization  procedure in 1D systems and gave extraordinary precise results. This is done by keeping the states of subsystems which
are relevant to describe the whole wave-function, and not those that minimize the energy on
that subsystemc \cite{[RoOs96]}.

In \cite{AccMuSa1,AccMuSa2,AccMuSa3,AOM,MG17,MG19} a QMC approach has been used to investigate models defined over the Cayley trees.
In this path the QMC scheme is based on the
 $C^*$-algebraic framework (see also \cite{[AcSouElG20],MG20}).
Furthermore, in \cite{MBS161,MBS162,MBSG20,MR1,MS19} we have established that Gibbs measures of the Ising model with competing (Ising) interactions (with commuting interactions) on a Cayley trees, can be considered as  QMC (see also \cite{MS20}).

In this paper, we first propose a new construction of QMC on trees, which is an extension of QMC considered in \cite{AOM,DM19,Park}. Using such a construction, we are able to construct QMC on tress associated with OQRW. Furthermore, our investigation leads to the detection of the phase transition phenomena within the proposed scheme. This kind of phenomena appears first time in this direction. Moreover, mean entropies of QMCs are calculated (cp. \cite{OW19,Watan11}). We point out that, recently, in \cite{Mach21} 1st and 2nd moments of the open quantum walk have been studied and found its standard deviation. A phase transition is observed by evaluating the standard deviation, i.e. whether the quantum walk has diffusive or ballistic behavior.

\section{Preliminaries}\label{sec_prel}

Let $\Gamma^k_{+} = (V,E)$ be the semi-infinite Cayley tree of order $k$  with root  $o$. The Cayley tree of order $k$ is characterized by being a tree for which every vertex has exactly
 $k+1$ nearest-neighbors ( see \cite{ MBS161}). Recall that, two vertices $x$ and $y$ are  {\it nearest neighbors} (denoted $x\sim y$ ) if they are joined through an edge (i.e. $<x,y>\in E$). A   list $ x\sim x_1\sim \dots \sim x_{d-1}\sim y$ of vertices is called a {\it
path} from $x$ to $y$. The distance on the tree between two vertices $x$ and $y$ (denoted $d(x,y)$) is the length of the shortest edge-path
joining them.

Define
\[W_n := \{x\in V \quad \mid\quad d(x,o) = n \}\]
\[ \Lambda_{n}: = \bigcup_{j\le n}W_j;\quad  \Lambda_{[m,n]} = \bigcup_{j=m}^{n}W_j.\]

Recall a coordinate structure in $\G^k_+$:  every vertex $x$
(except for $x^0$) of $\G^k_+$ has coordinates $(i_1,\dots,i_n)$,
here $i_m\in\{1,\dots,k\}$, $1\leq m\leq n$ and for the vertex
$x^0$ we put $(0)$.  Namely, the symbol $(0)$ constitutes level 0,
and the sites $(i_1,\dots,i_n)$ form level $n$ (i.e. $d(x^0,x)=n$)
of the lattice.  Using this structure, vertices
$x^{(1)}_{W_n},x^{(2)}_{W_n},\cdots,x^{(|W_n|)}_{W_n}$ of $W_n$
can be represented as follows:
\begin{eqnarray}\label{xw}
&&x^{(1)}_{W_n}=(1,1,\cdots,1,1), \quad x^{(2)}_{W_n}=(1,1,\cdots,1,2), \ \ \cdots \quad x^{(k)}_{W_n}=(1,1,\cdots,1,k,),\\
&&x^{(k+1)}_{W_n}=(1,1,\cdots,2,1), \quad
x^{(2)}_{W_n}=(1,1,\cdots,2,2), \ \ \cdots \quad
x^{(2k)}_{W_n}=(1,1,\cdots,2,k),\nonumber
\end{eqnarray}
\[\vdots\]
\begin{eqnarray*}
&&x^{(|W_n|-k+1)}_{W_n}=(k,k,,\cdots,k,1), \
x^{(|W_n|-k+2)}_{W_n}=(k,k,\cdots,k,2),\ \ \cdots
x^{|W_n|}_{W_n}=(k,k,\cdots,k,k).
\end{eqnarray*}
In the above notations, we write
$$
 W_n = \{ (i_1, i_2, \cdots, i_n); \quad i_j = 1,2, \cdots, k \}
$$
So one can see that $|W_n|=k^n$.
The set of \textit{direct successors}  for a given vertex $x\in V$  is defined  by
\begin{equation}\label{S(x)def}
S(x) :  = \left\{y\in V \, \,  : \, \,  x\sim y \, \, \hbox{and} \, \, d(y,o) > d(x,o) \right\}.
\end{equation}
The vertex $x$ has exactly $k$ direct successors denoted $(x,i), i=1,2,\cdots, k$
$$
S(x) = \{(x,1),\,  (x,2), \, \cdots, \, (x,k)\}.
$$

To each vertex $x$, we associate a C$^*$--algebra of observable $\mathcal{A}_x$ with identity $\id_x$. For a given bounded region $V'\subset V$, we consider the algebra $\mathcal{A}_{V'} = \bigotimes_{x\in V'}\mathcal{A}_x$. We have the  the following natural embedding
$$
\mathcal{A}_{ \Lambda_{n }}\equiv  \mathcal{A}_{ \Lambda_{n }}\otimes\id_{ W_{n+1}}\subset \mathcal{A}_{ \Lambda_{n+1 }}.
$$
The algebra $\mathcal{A}_{ \Lambda_{n}}$ is then a subalgebra of $\mathcal{A}_{ \Lambda_{n+1}}$. It follows the local algebra
\begin{equation}\label{AVloc}
  \mathcal{A}_{V;\, loc} := \bigcup_{n\in\mathbb{N}}\mathcal{A}_{\Lambda_{n}}
\end{equation}
and the quasi-local algebra
$$
\mathcal{A}_V := \overline{\mathcal{A}_{V;\, loc}}^{C^*}
$$
The set of states on a C$^*$--algebra $\mathcal{A}$ will be denoted $\mathcal{S}(\mathcal{A})$.

Consider a triplet $\mathcal C\subseteq\mathcal B\subseteq \mathcal A$ of C$^\ast$--algebras. A \textit{quasi-conditional expectation} \cite{ACe} is a completely positive identity preserving linear map $E :\mathcal A \to \mathcal B$ such that
$
E(ca) = cE(a)$, for all $a\in\mathcal A$, $c\in \mathcal C.
$
\begin{definition}\cite{ACe}
Let $\mathcal{B}\subseteq \mathcal{A}$ be two unitary C$^*$--algebra $\id$. A Markov transition expectation from $\mathcal{A}$ into $\mathcal{B}$ is a completely positive identity preserving map.
\end{definition}

 \begin{definition}\label{QMCdef}\cite{[AcSouElG20]} A {\it (backward) quantum Markov chain}  on $\mathcal{A}_V$
 is a triplet $(\phi_o, (E_{\Lambda_{n}})_{n\ge 0},  (h_{n})_n)$
of initial state $\phi_o\in \mathcal{S}(\mathcal{A}_o)$, a sequence of quasi-conditional expectations $(E_{ \Lambda_{n }})_n$ w.r.t.
the triple $\mathcal{A}_{{\Lambda}_{n-1 }}\subseteq \mathcal{A}_{ \Lambda_{n }}\subseteq\mathcal{A}_{ \Lambda_{n+1}}$ and
a sequence $h_{n}\in\mathcal{A}_{W_n, +}$ of boundary conditions such that for each $a\in \mathcal{A}_V$  the limit
\begin{equation}\label{lim_Mc}
\varphi(a): = \lim_{n\to\infty} \phi_0\circ E_{ \Lambda_{0}}\circ
E_{ \Lambda_{1}} \circ \cdots \circ E_{ \Lambda_{n}}(h_{n+1}^{1/2}ah_{n+1}^{1/2})
\end{equation}
exists in the weak-*-topology and defines a state. In this case the state $\varphi$ defined by (\ref{lim_Mc})
is also called quantum Markov chain (QMC).
\end{definition}

\begin{remark}
The above definition introduces  quantum Markov chains on trees
 as a triplet generalizing the definitions considered in \cite{[AcFiMu07],AOM,[AMSo],MBS161}
 by adding the boundary conditions.
 On the other hand, it extends to trees the recent unifying definition for
quantum Markov chains on the one-dimensional case \cite{[AcSouElG20]}.
\end{remark}

 \section{QMC associated with OQRW on trees}\label{QMC_tree}

 Let $\mathcal{H}$ and $\mathcal{K}$ be two separable Hilbert spaces. Let   $\{ |i\rangle\}_{i\in \Lambda}$ be an  ortho-normal basis of $\mathcal{K}$ indexed by a graph $\Lambda$. To each $x\in V$  we associate the algebra  $\mathcal{A}_x \equiv \mathcal{A}:= \mathcal{B}(\mathcal{H}\otimes\mathcal{K})$.

Let $\cm$ be a OQRW given by \eqref{MM1}. In this section we will
use notations from the previous sections.

As before, for each $(i,j)\in\Lambda^2,$ one associates an operator $B_{j}^{i}\in \mathcal{B}(\mathcal{H})$ to describe the transition from the state $|j\rangle$ to the state $|i\rangle$ such that
 \begin{equation}\label{sumBB=1}
 \sum_{i\in\Lambda}B_j^{i*}B_{j}^{i} = \id_{\mathcal{B}(\mathcal{H})}.
 \end{equation}
 Consider the density operator $\rho\in \mathcal{B}(\mathcal{H}\otimes\mathcal{K})$, of the form
 $$
 \rho = \sum_{i\in\Lambda}\rho_i\otimes |i\rangle\langle  i|; \quad \rho_i\in\mathcal{B}(\mathcal{H})^{+}.
 $$
 In what follows, for the sake of simplicity of calculations,  we assume that $\r_i\neq 0$ for all $i\in\Lambda$ (see \cite[Remark 4.5]{DM19} for other kind of initial states).

 Let us consider
 \begin{equation}\label{Mij}
 M_j^i = B_j^i\otimes|i\rangle\langle j| \in\mathcal{B}(\mathcal{H}\otimes \mathcal{K}).
 \end{equation}
 Put
 \begin{equation}\label{Aij}
 A_{j}^{i} := \frac{1}{{\Tr(\rho_j)}^{1/2}}\rho_j^{1/2}\otimes |i\rangle\langle j|, \quad i,j\in \Lambda.
 \end{equation}
For each $u\in V$, we set
\begin{equation}\label{Kji}
 {K_{j}^{i}}^{(\{u\}\cup S(u))} := {M_{j}^{i*}}^{(u)}\otimes\bigotimes_{v\in S(u)}{A_{j}^{i}}^{(v)} \in\mathcal{A}_{\{u\}\cup S(u)}.
\end{equation}

Put
\begin{equation}\label{Eu_def}
\mathcal{E}_u(a)
:=\sum_{(i,j), (i',j')\in\Lambda^2} \tr_{u]}\left( {K_{j}^{i}}^{(\{u\}\cup S(u))}a{ K_{j'}^{i'}}^{(\{u\}\cup S(u)),*}\right); \quad a\in\mathcal{A}_{\{u\}\cup S(u)}.
\end{equation}
For the sake of shortness, if no confusion is caused, the operator ${K_{j}^{i}}^{(\{u\}\cup S(u))}$ will be denoted simply by ${K_{j}^{i}}$.
\begin{lemma}\label{lemmaEu}
 For each $u\in V,$ the map $\mathcal{E}_{u}$ defines a Markov transition expectation from $\mathcal{A}_{\{u\}\cup S(u)}$ into $\mathcal{A}_u$. Moreover, we have
\begin{equation}\label{Eu}
    \mathcal{E}_{u}(a_{0}^{(u)}\otimes a_1^{(u,1)}\otimes\cdots\otimes a_k^{(u,k)}) =   \sum_{(i,j ,j')\in \Lambda^3} \left(\prod_{\ell=1}^{k} \varphi_{j,j'}(a_{\ell}^{(u,\ell)})\right) M_j^{i*} a_{0}^{(u)}M_{j'}^{i}
\end{equation}
where
\begin{equation}\label{phijj'}
\varphi_{jj'}(b):= \frac{1}{\Tr(\rho_j)^{1/2}\Tr(\rho_{j'})^{1/2}}\Tr\left(\rho_j^{1/2} \rho_{j'}^{1/2} \otimes |j'\rangle\langle j|\, b\right); \quad \forall b\in\mathcal{B}(\mathcal{H})\otimes \mathcal{B}(\mathcal{K})
\end{equation}
for every $a_0, a_1, \cdots, a_k\in \mathcal{B}(\mathcal{H})\otimes \mathcal{B}(\mathcal{K}).$
\end{lemma}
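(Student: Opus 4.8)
The plan is to verify the two assertions of Lemma~\ref{lemmaEu} in sequence: first that $\mathcal{E}_u$ is a Markov transition expectation (completely positive and identity-preserving from $\mathcal{A}_{\{u\}\cup S(u)}$ into $\mathcal{A}_u$), and then the explicit formula \eqref{Eu} on product elements, which by linearity and density determines $\mathcal{E}_u$ everywhere.

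\textbf{Complete positivity and the target algebra.} Complete positivity is essentially free: for fixed $(i,j)$ and $(i',j')$ the map $a\mapsto \tr_{u]}(K_j^i\, a\, K_{j'}^{i'*})$ is completely positive when $(i,j)=(i',j')$, and in general the whole expression \eqref{Eu_def} is of Kraus-type $\sum_\alpha \tr_{u]}(L_\alpha a L_\alpha^*)$ once one recognizes the double sum as $\tr_{u]}\big(L\, (a\otimes \id)\, L^*\big)$ for a single operator $L=\sum_{(i,j)} K_j^i \otimes |(i,j)\rangle$ acting into an enlarged space; partial trace preserves complete positivity, so $\mathcal{E}_u$ is CP. That the range lands in $\mathcal{A}_u=\mathcal{B}(\mathcal{H}\otimes\mathcal{K})$ follows because, after tracing out the successor factors $S(u)$, what remains acts only on the $u$-site factor; the displayed formula \eqref{Eu} makes this transparent since the right-hand side is manifestly an element of $\mathcal{B}(\mathcal{H}\otimes\mathcal{K})$ (the $M_j^{i*}\,\cdot\,M_{j'}^i$ sandwich).

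\textbf{The explicit formula.} The main computation is \eqref{Eu}. On an elementary tensor $a_0^{(u)}\otimes a_1^{(u,1)}\otimes\cdots\otimes a_k^{(u,k)}$, substitute the definition \eqref{Kji} of $K_j^i$ as ${M_j^{i*}}^{(u)}\otimes\bigotimes_{v\in S(u)}{A_j^i}^{(v)}$. The expression factorizes across the $k+1$ sites: the site-$u$ factor contributes $M_j^{i*}\,a_0^{(u)}\,M_{j'}^{i}$ (note $K_j^i$ carries $M_j^{i*}$, so the adjoint $K_{j'}^{i'*}$ carries $M_{j'}^{i'}$), and each successor site $(u,\ell)$ contributes $\tr\big({A_j^i}^{(v)}\,a_\ell\,{A_{j'}^{i'}}^{(v)*}\big)$. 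Plugging in \eqref{Aij}, this successor trace equals $\tfrac{1}{\Tr(\rho_j)^{1/2}\Tr(\rho_{j'})^{1/2}}\Tr\big((\rho_j^{1/2}\otimes|i\rangle\langle j|)\,a_\ell\,(\rho_{j'}^{1/2}\otimes|j'\rangle\langle i'|)\big)$; using the trace identity $\langle i|\cdot|i'\rangle$ collapsing, one sees this is nonzero only when $i=i'$, and then it reduces exactly to $\varphi_{jj'}(a_\ell)$ as defined in \eqref{phijj'} (after cycling $|i\rangle\langle i|$ and using $\Tr|i\rangle\langle i|=1$). The constraint $i=i'$ from every successor factor is what collapses the sum over $\Lambda^4$ in \eqref{Eu_def} down to the sum over $\Lambda^3$ (indices $i,j,j'$) in \eqref{Eu}; when $k\ge 1$ there is at least one successor to enforce it, and the case-bookkeeping here is the one place one must be slightly careful.

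\textbf{Identity preservation.} Finally, put $a_0=\id$ and each $a_\ell=\id$. Then $\varphi_{jj'}(\id)=\tfrac{1}{\Tr(\rho_j)^{1/2}\Tr(\rho_{j'})^{1/2}}\Tr(\rho_j^{1/2}\rho_{j'}^{1/2})\,\langle j|j'\rangle=\delta_{jj'}$, since $|j'\rangle\langle j|$ has zero trace in the $\mathcal{K}$-factor unless $j=j'$, in which case the factor is $\Tr(\rho_j)/\Tr(\rho_j)=1$. Hence \eqref{Eu} with all arguments the identity collapses to $\sum_{i,j} M_j^{i*}\,M_j^{i}=\sum_{i,j}{M_j^i}^*M_j^i=\id$ by \eqref{sumMij=1} (equivalently \eqref{sumBB=1} tensored up). This gives $\mathcal{E}_u(\id)=\id$ and completes the proof. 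The genuine obstacle is not conceptual but notational: keeping the $u$-site operator ordering straight (the chain carries $M^{i*}$ on the left, so adjoints flip it) and correctly tracking which index identifications ($i=i'$) are forced by the partial trace over the successor sites, so that the index set shrinks from $\Lambda^4$ to $\Lambda^3$ precisely as claimed.
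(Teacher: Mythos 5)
Your proof follows essentially the same route as the paper's: expand $\mathcal{E}_u$ on an elementary tensor, factorize across the site $u$ and its successors, compute the successor traces $\Tr(A_j^i a_\ell A_{j'}^{i'*})$ by cyclicity to produce the Kronecker factor $\delta_{i,i'}$ and the functional $\varphi_{jj'}$, so that the index set collapses from $\Lambda^4$ to $\Lambda^3$; your added check that $\varphi_{jj'}(\id)=\delta_{jj'}$ together with \eqref{sumMij=1} gives unitality, a step the paper only carries out later, so including it is a plus. One imprecision: your complete-positivity argument via $L=\sum_{(i,j)}K_j^i\otimes|(i,j)\rangle$ on an enlarged space does not work as stated, since tracing out the auxiliary register reproduces only the diagonal terms $(i,j)=(i',j')$ and misses the cross terms present in \eqref{Eu_def}; the correct (and simpler) observation, which is the paper's, is that the full double sum is already a single Kraus sandwich, $\tr_{u]}\bigl(\bigl(\sum_{(i,j)}K_j^i\bigr)\,a\,\bigl(\sum_{(i,j)}K_j^i\bigr)^{*}\bigr)$, and a partial trace of a CP map is CP.
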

\begin{proof} 
Let $a =a_{0}^{(u)}\otimes a_1^{(u,1)}\otimes\cdots\otimes a_k^{(u,k)}$, according to (\ref{Eu_def}) one has
\begin{eqnarray*}
\mathcal{E}_{u}( a )
&=& \sum_{(i,j), (i',j')\in\Lambda^2} \tr_{u]}\left( {K_{j}^{i}} a_{0}^{(u,0)}\otimes a_{1}^{ (u,1)}\cdots\otimes a_{k}^{(u,k)}{ K_{j'}^{i'}}^*\right)\\
&=&  \tr_{u]}\left( \left(\sum_{(i,j) \in\Lambda^2}{K_{j}^{i}}\right) a \left(\sum_{  (i,j)\in\Lambda^2}{ K_{j}^{i}}\right)^*\right)\\
\end{eqnarray*}
Then $\mathcal{E}_u$ has a Krauss form and it is completely positive.
Taking into account (\ref{Kji}) and  (\ref{Mij}) one gets
\begin{eqnarray*}
 \mathcal{E}_{u}(a) &= &
 \sum_{(i,j),(i',j')\in \Lambda^2}\tr_{u]}\left( {K_{j}^{i}}^{(\{u\}\cup S(u))}a{ K_{j'}^{i'}}^{(\{u\}\cup S(u))*}\right)\\
&=&\sum_{(i,j),(i',j')\in \Lambda^2} M_j^{i*}a_0^{(u, 0)}M_{j'}^{i'} \prod_{\ell=1}^{k} \Tr( A_{j}^{i}a_{\ell}^{(u,\ell)}A_{j'}^{i'*}).\\
\end{eqnarray*}
From (\ref{Aij}) for each $\ell\in\{1,\dots, k\}$ one has
\begin{eqnarray*}
\Tr( A_{j}^{i}a_{\ell}^{(u,\ell)}A_{j'}^{i'*}) &=& \Tr( A_{j'}^{i'*}A_{j}^{i}a_{l}^{(u,\ell)})\\
&=& \frac{1}{\Tr(\rho_j)^{1/2}\Tr(\rho_{j'})^{1/2}}\Tr(\rho_{j'}^{1/2}\otimes |j'\rangle\langle i'|\rho_{j}^{1/2}\otimes |i\rangle\langle j|a_{\ell}^{(u,\ell)})\\
&=& \frac{1}{\Tr(\rho_j)^{1/2}\Tr(\rho_{j'})^{1/2}}\Tr(\rho_{j'}^{1/2}\rho_{j}^{1/2}\otimes |j'\rangle\langle   j|a_{\ell}^{(u,\ell)})\delta_{i,i'}\\
\end{eqnarray*}
where $\delta_{i,i'}$ denotes the Kronecker symbol. This leads to (\ref{Eu}) and finishes the proof. \end{proof}
\begin{lemma}\label{lem-ELbn}
For each $n\in\mathbb{N}$, the map
\begin{equation}\label{E-Wn}
\mathcal{E}_{W_n} = \bigotimes_{u\in W_n}\mathcal{E}_{u}
\end{equation}
defines a Markov transition expectation from $\mathcal{A}_{\Lambda_{[n,n+1]}}$ into $\mathcal{A}_{W_n}$. Moreover, the map
\begin{equation}\label{E_Lbn}
    E_{\Lambda_n} = id_{\mathcal{A}_{\Lambda_{n-1}}}\otimes \mathcal{E}_{W_n}
\end{equation}
is a quasi-conditional expectation w.r.t. the triplet $\mathcal{A}_{\Lambda_{n-1}}\subset\mathcal{A}_{\Lambda_n}\subset\mathcal{A}_{\Lambda_{n+1}}$.
\end{lemma}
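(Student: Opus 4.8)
The plan is to build on Lemma~\ref{lemmaEu} in a purely structural way. First I would check the tensor-product claim: since $\Lambda_{[n,n+1]}=\bigcup_{u\in W_n}\bigl(\{u\}\cup S(u)\bigr)$ is a disjoint union (every vertex of $W_{n+1}$ has a unique parent in $W_n$), the algebra factorizes as $\mathcal{A}_{\Lambda_{[n,n+1]}}=\bigotimes_{u\in W_n}\mathcal{A}_{\{u\}\cup S(u)}$ and likewise $\mathcal{A}_{W_n}=\bigotimes_{u\in W_n}\mathcal{A}_u$. Hence $\mathcal{E}_{W_n}=\bigotimes_{u\in W_n}\mathcal{E}_u$ is well defined as a map between these algebras. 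A tensor product of completely positive maps is completely positive, and a tensor product of identity-preserving maps is identity-preserving, because $\mathcal{E}_{W_n}\bigl(\bigotimes_u \id_{\{u\}\cup S(u)}\bigr)=\bigotimes_u \mathcal{E}_u(\id_{\{u\}\cup S(u)})=\bigotimes_u \id_u=\id_{W_n}$; here the middle equality uses that each $\mathcal{E}_u$ is identity-preserving, which is part of Lemma~\ref{lemmaEu}. Therefore $\mathcal{E}_{W_n}$ is a Markov transition expectation from $\mathcal{A}_{\Lambda_{[n,n+1]}}$ into $\mathcal{A}_{W_n}$, proving the first assertion.

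For the second assertion I would set $E_{\Lambda_n}=\mathrm{id}_{\mathcal{A}_{\Lambda_{n-1}}}\otimes\mathcal{E}_{W_n}$, viewed as a map $\mathcal{A}_{\Lambda_{n+1}}=\mathcal{A}_{\Lambda_{n-1}}\otimes\mathcal{A}_{\Lambda_{[n,n+1]}}\to\mathcal{A}_{\Lambda_{n-1}}\otimes\mathcal{A}_{W_n}=\mathcal{A}_{\Lambda_n}$. Complete positivity and identity preservation are again inherited from the tensor-factor structure together with the first part. It remains to verify the quasi-conditional property with respect to $\mathcal{A}_{\Lambda_{n-1}}\subset\mathcal{A}_{\Lambda_n}\subset\mathcal{A}_{\Lambda_{n+1}}$, i.e. that $E_{\Lambda_n}(ca)=c\,E_{\Lambda_n}(a)$ for all $a\in\mathcal{A}_{\Lambda_{n+1}}$ and $c\in\mathcal{A}_{\Lambda_{n-1}}$. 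This follows because $c$ acts only on the tensor factor $\mathcal{A}_{\Lambda_{n-1}}$ on which $E_{\Lambda_n}$ is the identity map: writing $a=\sum_\alpha b_\alpha\otimes a_\alpha$ with $b_\alpha\in\mathcal{A}_{\Lambda_{n-1}}$ and $a_\alpha\in\mathcal{A}_{\Lambda_{[n,n+1]}}$ (finite sums dense in the algebraic tensor product, then by continuity), one gets $E_{\Lambda_n}(ca)=\sum_\alpha cb_\alpha\otimes\mathcal{E}_{W_n}(a_\alpha)=c\sum_\alpha b_\alpha\otimes\mathcal{E}_{W_n}(a_\alpha)=c\,E_{\Lambda_n}(a)$.

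The only genuinely delicate point, and the one I would treat most carefully, is the identification $\mathcal{A}_{\Lambda_{[n,n+1]}}\cong\bigotimes_{u\in W_n}\mathcal{A}_{\{u\}\cup S(u)}$ and the compatibility of $\mathcal{E}_{W_n}$ with the range algebra sitting inside $\mathcal{A}_{\Lambda_{n+1}}$ via the natural embeddings fixed in Section~\ref{sec_prel}; one must make sure that after applying $\mathcal{E}_{W_n}$ the image really lands in $\mathcal{A}_{W_n}\subset\mathcal{A}_{\Lambda_n}$ rather than in some larger algebra, which is exactly why the codomain of each $\mathcal{E}_u$ in Lemma~\ref{lemmaEu} was arranged to be $\mathcal{A}_u$ and not $\mathcal{A}_{\{u\}}\otimes(\text{something on }S(u))$. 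Everything else is the standard bookkeeping that tensor products of Markov transition expectations are Markov transition expectations and that tensoring with an identity map turns a Markov transition expectation into a quasi-conditional expectation; I would state these as one-line consequences rather than grinding through them.
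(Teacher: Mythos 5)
Your proposal is correct and follows the same route as the paper's (much terser) proof: the paper simply invokes the disjoint decomposition $W_{n+1}=\bigsqcup_{u\in W_n}S(u)$ of the tree together with Lemma \ref{lemmaEu}, which is exactly the factorization $\mathcal{A}_{\Lambda_{[n,n+1]}}\cong\bigotimes_{u\in W_n}\mathcal{A}_{\{u\}\cup S(u)}$ and the tensor-product/module-property bookkeeping you spell out. Your expanded verification of complete positivity, identity preservation, and the quasi-conditional property $E_{\Lambda_n}(ca)=cE_{\Lambda_n}(a)$ is a faithful elaboration of what the paper leaves implicit.
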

\begin{proof} Thanks to the Cayley tree structure $W_{n+1} = \bigsqcup_{u\in W_n}S(u)$, where $\bigsqcup$ means the disjointedness of the union.  One gets the result using Lemma \ref{lemmaEu}.
\end{proof}
\begin{remark}
 In the notations of Definition \ref{QMCdef} the triplet $(\phi_o, (E_{\Lambda_{n}})_{n\ge 0},  (h_{n})_n)$ defining a quantum Markov chain $\varphi$ on $\mathcal{A}_V$ through (\ref{lim_Mc}) reduces to a finer triplet $(\phi_o, (\mathcal{E}_{u})_{u\in V}, (h_{u})_{u\in V})$ where $\phi_o\in\mathcal{S}(\mathcal{A}_o)$, the family of localized Markov transition expectations  $(\mathcal{E}_{u})_{u\in V}$ relates to the sequence of quasi-conditional expectations  $(E_{\Lambda_n})_n$ through (\ref{E-Wn}) and (\ref{E_Lbn}) and $h_n = \bigotimes_{u\in W_n}h_u$.
\end{remark}

\begin{theorem}\label{thm_E_expression} Let $M_{j}^{i}$ and $A_{j}^{i}$ be given by (\ref{Aij}) and (\ref{Mij}). In the notations of Lemma \ref{lemmaEu}, if an initial density matrix $\omega_o\in \mathcal{A}_{o; +}$ and a boundary condition $(h_u)_{u\in V}$ satisfy
\begin{equation}\label{TrwohoOQRW}
    \Tr(\omega_oh_o) = 1
\end{equation}
and
\begin{equation}\label{h=E(h)OQRW}
\sum_{i,j,j'\in\Lambda} M_j^{i*} M_{j'}^{i} \prod_{\ell=1}^{k} \varphi_{j,j'}(h_{(u,\ell)}) = h_u
\end{equation}
Then the triplet $(\omega_o, (\mathcal{E}_{u})_{u\in V}, (h_u)_{u\in V})$ defines a quantum Markov chain $\varphi$ on the algebra $\mathcal{A}_V$. Moreover, for each $a= \bigotimes_{u\in \Lambda_n}a_u\in\mathcal{A}_{\Lambda_n}$ one has
\begin{equation}\label{phi_id}
\varphi(a)= \sum_{ j , j'\in \Lambda} \Tr\left(\mathcal{M}_{jj'}(\omega_o) a_o \right)\prod_{u\in \Lambda_{[1,n]}}\psi_{j,j'}(a_{u})
\prod_{v\in \Lambda_{n+1}} \varphi_{j,j'}(h^{(v)})
\end{equation}
 where $\mathcal{E}_u$ is given by (\ref{Eu}), the functional $\varphi_{jj}$ is given by (\ref{phijj'}), and
\begin{equation}\label{Mjj'}
   \mathcal{M}_{jj'}( \cdot ) = \sum_{i\in\Lambda} M_{j'}^{i} \, \cdot\,M_j^{i*},
\end{equation}
\begin{equation}\label{psi_jjprime}
\psi_{j,j'}(b)= \frac{1}{\Tr(\rho_j)^{1/2}\Tr(\rho_{j'})^{1/2}}\sum_{i\in\Lambda}\Tr\left(B_{j'}^{i}\rho_{j'}^{1/2} \rho_{j}^{1/2}{B_{j}^{i}}^{*} \otimes |i\rangle\langle i|\, b\right).
\end{equation}
\end{theorem}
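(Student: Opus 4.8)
The plan is to verify the two hypotheses of Definition \ref{QMCdef} — namely that the limit in \eqref{lim_Mc} exists and defines a state — and then to extract the closed form \eqref{phi_id} by iterating the action of the quasi-conditional expectations $E_{\Lambda_n}$ computed via Lemma \ref{lem-ELbn} and Lemma \ref{lemmaEu}. The starting point is that, by Lemma \ref{lem-ELbn}, $E_{\Lambda_n} = \mathrm{id}_{\mathcal{A}_{\Lambda_{n-1}}}\otimes\mathcal{E}_{W_n}$ and $\mathcal{E}_{W_n} = \bigotimes_{u\in W_n}\mathcal{E}_u$, so a fixed simple tensor $a=\bigotimes_{u\in\Lambda_n}a_u$ pushed through $E_{\Lambda_n}\circ E_{\Lambda_{n+1}}\circ\cdots$ together with the boundary term $h_{m+1}=\bigotimes_{u\in W_{m+1}}h_u$ is handled factor-by-factor along the tree. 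Applying \eqref{Eu} at each leaf vertex $u\in W_m$ to the tensor $h_u^{(u)}\otimes h_{(u,1)}^{(u,1)}\otimes\cdots\otimes h_{(u,k)}^{(u,k)}$ produces, using hypothesis \eqref{h=E(h)OQRW}, exactly $h_u$ — this is the boundary-consistency condition that makes the tail of the product telescope. Hence for $m\ge n$ the expression $\phi_0\circ E_{\Lambda_0}\circ\cdots\circ E_{\Lambda_m}(h_{m+1}^{1/2}ah_{m+1}^{1/2})$ stabilizes and equals its value at $m=n$; this simultaneously proves existence of the limit and that it is independent of $m$, so no genuine convergence argument is needed beyond stationarity.

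Next I would compute that stabilized value. Starting from the boundary $W_{n+1}$ and moving inward, each application of $\mathcal{E}_u$ at a vertex $u\in\Lambda_{[1,n]}$ contributes, by \eqref{Eu}, a scalar factor of the form $\prod_{\ell}\varphi_{j,j'}(\cdot)$ coming from the successors together with a sandwiching $M_j^{i*}(\cdot)M_{j'}^{i}$ acting on the $\mathcal{H}\otimes\mathcal{K}$ factor at $u$; the key bookkeeping observation is that the indices $(i,j,j')$ appearing at a vertex must be compatible with those appearing at its parent, and after resolving the Kronecker deltas one is left with a single pair $(j,j')$ propagating from the root all the way out, with each internal vertex $u$ contributing a factor which is precisely $\psi_{j,j'}(a_u)$ as in \eqref{psi_jjprime} (this is \eqref{Eu} evaluated on $a_u$ after the successor-tree beneath $u$ has already been collapsed to scalars), each leaf-boundary vertex $v\in\Lambda_{n+1}$ contributing $\varphi_{j,j'}(h^{(v)})$, and the root contributing $\Tr(\mathcal{M}_{jj'}(\omega_o)a_o)$ after pairing with $\phi_0=\Tr(\omega_o\,\cdot\,)$ and recognizing $\sum_i M_{j'}^i\,\omega_o\,M_j^{i*}=\mathcal{M}_{jj'}(\omega_o)$. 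Summing over $j,j'\in\Lambda$ yields \eqref{phi_id}. To see that $\varphi$ is a state, positivity follows because each $\mathcal{E}_u$ is completely positive (Lemma \ref{lemmaEu}) and $\phi_0$ is positive, while normalization $\varphi(\id)=1$ follows by setting every $a_u=\id$: identity-preservation of the $\mathcal{E}_u$ collapses the whole tree, \eqref{h=E(h)OQRW} with $a_u=\id$ reduces the boundary to $h_o$, and \eqref{TrwohoOQRW} gives $\Tr(\omega_o h_o)=1$; density and the weak-$*$ limit over the quasi-local algebra then follow by the standard extension argument since the simple tensors are norm-dense in $\mathcal{A}_{V;\,loc}$.

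The main obstacle is the combinatorial index-tracking in the second step: \eqref{Eu} carries \emph{three} independent indices $(i,j,j')$ at each vertex, and one must show carefully that propagating these up the $k$-ary tree and repeatedly collapsing successor subtrees into the scalars $\varphi_{j,j'}$ forces the $j$ and $j'$ labels to be globally constant while the $i$ labels get summed out locally at each edge via the $\delta_{i,i'}$ already isolated in the proof of Lemma \ref{lemmaEu}. Concretely, one should verify by downward induction on the level that, for a fixed pair $(j,j')$, the partial evaluation of $\mathcal{E}_{W_m}\circ\cdots\circ\mathcal{E}_{W_n}$ on $\bigotimes_{u}a_u$ restricted to a subtree rooted at $u\in W_m$ equals $\big(\prod_{v\text{ in subtree of }u}\psi_{j,j'}(a_v)\big)\big(\prod_{\text{leaves }w}\varphi_{j,j'}(h^{(w)})\big)$ times the operator $M_j^{i*}(\cdot)M_{j'}^{i}$-dressing that then feeds into $u$'s parent; the base case is a direct reading of \eqref{Eu}, and the inductive step is \eqref{Eu} applied once more with the successor tensors already replaced by scalars. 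Once this induction is in place, the rest is bookkeeping and the root/initial-state pairing, and \eqref{phi_id} drops out.
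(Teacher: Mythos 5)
Your proposal is correct and follows essentially the same route as the paper's proof: the compatibility condition \eqref{h=E(h)OQRW} makes the sequence $\phi_0\circ E_{\Lambda_0}\circ\cdots\circ E_{\Lambda_m}(h_{m+1}^{1/2}\,a\,h_{m+1}^{1/2})$ stationary for $m\ge n+1$ (so the limit exists trivially and normalization reduces to \eqref{TrwohoOQRW}), and the closed form \eqref{phi_id} is obtained by exactly the paper's index bookkeeping, where the Kronecker deltas $\delta_{j_v,j_u}\delta_{j'_v,j'_u}\delta_{i'_u,i_u}$ force a single pair $(j,j')$ to propagate from the root outward, each interior vertex contributing $\psi_{j,j'}(a_u)$, the boundary contributing $\varphi_{j,j'}(h^{(v)})$, and the root pairing with $\omega_o$ giving $\Tr\left(\mathcal{M}_{jj'}(\omega_o)a_o\right)$. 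The only slip is notational: in the telescoping step the transition expectation at a vertex $u\in W_m$ with $m>n$ acts on $\id^{(u)}\otimes h_{(u,1)}\otimes\cdots\otimes h_{(u,k)}$ (not on $h_u^{(u)}\otimes h_{(u,1)}\otimes\cdots\otimes h_{(u,k)}$), which is precisely the form in which \eqref{h=E(h)OQRW} yields $h_u$.
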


\begin{proof} Let us first prove the existence of the Markov chain $\varphi$ by examining the limit (\ref{lim_Mc}) for $E_{\Lambda_n}$ as in Lemma \ref{lem-ELbn} and $h_n =  \bigotimes_{u\in W_n}h_u$.
From Lemma \ref{lemmaEu}  the equality (\ref{h=E(h)OQRW}) is equivalent to
$$
h_u = \mathcal{E}_u(\id_{u}\otimes h_{(u,1)}\otimes \cdots h_{(u,k)})
$$
It follows that for each integer $m\in\mathbb{N}$ one has
\begin{equation}\label{Ewm(h)}
\mathcal{E}_{W_m}(\id_{W_n}\otimes h_{m+1}) = \bigotimes_{u\in W_m}\mathcal{E}_u(\id_u\otimes  h_{(u,1)}\otimes \cdots h_{(u,k)}) =  h_m
\end{equation}
Let $a = \bigotimes_{u\in \Lambda_n}a_u$, for each subset $I\subseteq\Lambda_n$, we denote $a_{I} := \bigotimes_{u\in I}a_u \otimes \id_{W_n\setminus I}$.

For each $m\ge n+1$ one has
 \begin{eqnarray*}
      \varphi_{m}(a)&:=&\varphi_o\circ E_{ \Lambda_0}\circ E_{ \Lambda_1}\circ\ldots\circ {E}_{ \Lambda_{m}}(h_{m+1}^{1/2}a h_{m+1}^{1/2})\\
      &\overset{(\ref{E_Lbn})}{=}& \varphi_o\left(\mathcal{E}_{ W_0}\left(a_o\otimes \mathcal{E}_{W_1}\left(a_{W_n}\cdots \mathcal{E}_{ W_{n}}\left(a_{W_n}\otimes \mathcal{E}_{ W_{n+1}}\left(\id_{W_{n+1}}\cdots  \mathcal{E}_{ W_{m}}\left(\id_{W_m}\otimes h_{m+1}\right)\right) \right)\right)\right)\right)\\
     &\overset{(\ref{Ewm(h)})}{=}& \varphi_o\left(\mathcal{E}_{ W_0}\left(a_o\otimes \mathcal{E}_{W_1}\left(a_{W_n}\cdots \mathcal{E}_{ W_{n}}\left(a_{W_n}\otimes \mathcal{E}_{ W_{n+1}}\left(\id_{W_{n+1}}\cdots  \mathcal{E}_{ W_{m-1}}\left(\id_{W_m}\otimes h_{m}\right)\right) \right)\right)\right)\right))\\
      &&\vdots\\
      &=&\varphi_o\left(\mathcal{E}_{W_0}\left(a_{W_0} \dots\left( \mathcal{E}_{W_{n-1}}\left(a_{W_{n-1}}\left( \mathcal{E}_{W_n}\left(a_{W_n}\otimes h_{n+1} \right)\right)\right)\right)\right)\right)
  \end{eqnarray*}
  Then the limit (\ref{lim_Mc}) exists in the strongly finite sense and defines a positive functional $\varphi$.  Thanks to (\ref{TrwohoOQRW})  $\varphi$ is a state on $\mathcal{A}_V$.
Therefore, the triplet $(\omega_o, (\mathcal{E}_u)_{u\in V}, (h_u)_{u\in V})$ defines a quantum Markov chain $\varphi$ on the algebra $\mathcal{A}_V$ given by
\begin{equation}\label{expression_phi}
\varphi(a) =   \varphi_o\left(\mathcal{E}_{W_0}\left(a_{W_0} \dots\left( \mathcal{E}_{W_{n-1}}\left(a_{W_{n-1}}\left( \mathcal{E}_{W_n}\left(a_{W_n}\otimes h_{n+1} \right)\right)\right)\right)\right)\right)
\end{equation}
Now let us determine the expression for $\varphi$.
From the tree structure, we get
$$
\mathcal{E}_{W_n}(a_{W_n}\otimes h_{n+1})
= \bigotimes_{v\in W_n}\mathcal{E}_{v}(a_{v}\otimes h_{(v,1)}\otimes \cdots\otimes h_{(v,k)})
$$
and
$$
\mathcal{E}_{W_{n-1}}\left(a_{W_{n-1}}\otimes \mathcal{E}_{W_n}\left(a_{W_n}\otimes h_{n+1} \right)\right) = \bigotimes_{u\in W_{n-1}}\mathcal{E}_u\left(a_u\otimes\bigotimes_{v\in S(u)}\mathcal{E}_v\left(a_v\otimes h_{(v,1)}\otimes \cdots\otimes h_{(v,k)}\right)\right)
$$
For each $u\in W_{n-1}$, one finds
\begin{eqnarray}
&&\mathcal{E}_u\left(a_u\otimes\bigotimes_{v\in S(u)}\mathcal{E}_v\left(a_v\otimes h_{(v,1)}\otimes \cdots\otimes h_{(v,k)}\right)\right) \nonumber \\
&& \overset{(\ref{Eu})}{=} \mathcal{E}_u\left(a_u\otimes\bigotimes_{v\in W_n}\left(\sum_{(i_v,j_v,j_v')\in\Lambda^3} \prod_{\ell=1}^{k}\varphi_{j_v,j_v'}(h_{(v,\ell)}){M_{j_v}^{i_v}}^{*} a_v M_{j'_v}^{i_v}\right)\right) \nonumber \\
&& = \sum_{\mathbf{i,j,j'}\in\Lambda^{S(u)}}\left(\prod_{v\in W_n}\prod_{\ell=1}^{k}\varphi_{j_v,j_v'}(h_{(v,\ell)})\right)
\mathcal{E}_{u}\left(a_{u}\otimes\bigotimes_{v\in S(u)}{ M_{j_{v}}^{i_v}}^*a_vM_{j'_{v}}^{i_{v}} \right)\nonumber \\
&& = \sum_{\mathbf{i,j,j'}\in\Lambda^{S(u)}}\sum_{i_u, j_u, i'_u j'_u\in \Lambda}\left(\prod_{v\in W_n}\prod_{\ell=1}^{k}\varphi_{j_v,j_v'}(h_{(v,\ell)})\right)\Tr\left(A_{j_u}^{i_u}{ M_{j_{v}}^{i_v}}^*a_vM_{j'_{v}}^{i_{v}}A_{j'_u}^{i'_u\, *}\right) {M_{j_u}^{i_u}}^*a_uM_{j'_u}^{i'_u}  \label{Eu_sum}
\end{eqnarray}
where $\mathbf{i'}= (i_v)_{v\in S(u)},\, \mathbf{j} = (j_v)_{v\in S(u)},\, \mathbf{j'} = ( j'_v)_{v\in S(u)}$ are sequences of elements of $\Lambda$ induced by $S(u).$

According to (\ref{Aij}) and (\ref{Mij}), for each $v\in S(u)$, we obtain
\begin{eqnarray*}
  &&\sum_{i_v\in\Lambda}\Tr\left(A_{j_u}^{i_u}{ M_{j_{v}}^{i_v}}^*a_v M_{j'_{v}}^{i_{v}}A_{j'_u}^{i'_u\, *}\right) =
\Tr\left(M_{j'_{v}}^{i_{v}}A_{j'_u}^{i'_u\, *}A_{j_u}^{i_u}{ M_{j_{v}}^{i_v}}^*a_v\right)  \\
  && = \sum_{i_v\in\Lambda}\frac{1}{\sqrt{\Tr(\rho_{j_u})\Tr(\rho_{j'_u})}}
\Tr\left(B_{j'_{v}}^{i_{v}}\rho_{j'_u}^{1/2}\rho_{j_u}^{1/2}B_{j_{v}}^{i_{v}\,*}\otimes
 |i_v\rangle\langle j^{'}_v||j'_{u}\rangle\langle i'_u||i_{u}\rangle\langle j_u||j_{v}\rangle\langle i_v| a_v\right)  \\
   &&=\frac{1}{\sqrt{\Tr(\rho_{j_u})\Tr(\rho_{j'_u})}}
\sum_{i_v\in\Lambda}\Tr\left(B_{j'_{v}}^{i_{v}}\rho_{j'_u}^{1/2}\rho_{j_u}^{1/2}B_{j_{v}}^{i_{v}\,*}\otimes
 |i_v\rangle\langle i_v| a_v\right)\delta_{j_v,j_u}\delta_{j'_v, j'_u}\delta_{i'_u, i_u}\\
  &&=\frac{1}{\sqrt{\Tr(\rho_{j_v})\Tr(\rho_{j'_v})}}
\sum_{i_v\in\Lambda}\Tr\left(B_{j'_{v}}^{i_{v}}\rho_{j'_v}^{1/2}\rho_{j_v}^{1/2}B_{j_{v}}^{i_{v}\,*}\otimes
 |i_v\rangle\langle i_v| a_v\right)\delta_{j_v,j_u}\delta_{j'_v, j'_u}\delta_{i'_u, i_u}\\
  &&= \psi_{j_vj'_v}(a_v)\delta_{j_v,j_u}\delta_{j'_v, j'_u}\delta_{i'_u, i_u}.
\end{eqnarray*}
here, as before, $\psi_{j_u,j'_u}$ is given by (\ref{psi_jjprime}).

This implies that, for  any $u\in W_{n-1}$ among the configurations $\mathbf{j,j'}\in \Lambda^{\{u\}\cup S(u)}$ only ones satisfying $(j_u, j'_u) =( j_v, j'_v)$ for all $v\in S(u)$, appear on the sum of the right hand side of (\ref{Eu_sum}). It follows that (\ref{Eu_sum}) becomes
$$
 \sum_{i, j,j'\in \Lambda}\left(\prod_{v\in S(u)}\psi_{j,j'\in\Lambda}(a_v)\right)\left(\prod_{v\in S(u)}\prod_{\ell=1}^{k}\varphi_{j,j'}(h_{(v,\ell)})\right)  {M_{j}^{i}}^*a_uM_{j'}^{i}
$$
Iterating the above procedure, for each $m\le n$, one finds
$$
\mathcal{E}_{W_m}(a_{W_m}\otimes \mathcal{E}_{W_{m+1}}(a_{W_{m+1}}\otimes\cdots \mathcal{E}_{W_n}(a_{W_n}\otimes h_{n+1})))
$$
$$
= \sum_{j,j'\in \Lambda^{W_m}}\left(\prod_{v\in \Lambda_{[m,n]}}\psi_{j,j'\in\Lambda}(a_v)\right)\left(\prod_{w\in W_{n+1}}\prod_{\ell=1}^{k}\varphi_{j,j'}(h_{w})\right) \bigotimes_{u\in W_m}\sum_{i_u} {M_{j}^{i_u}}^*a_uM_{j'}^{i_u}
$$
Since for $m=0$, one has $W_0 = \{o\}$ then
$$
\varphi(a) = \sum_{j,j'\in \Lambda }\left(\prod_{v\in \Lambda_{n}}\psi_{j,j'\in\Lambda}(a_v)\right)\left(\prod_{w\in W_{n+1}}\prod_{\ell=1}^{k}\varphi_{j,j'}(h_{w})\right) \varphi_o\bigg(\sum_{i\in\Lambda} {M_{j}^{i}}^*a_uM_{j'}^{i}\bigg)
$$
hence
 $$
 \varphi_o\bigg(\sum_{i \in\Lambda} {M_{j}^{i }}^*a_uM_{j'}^{i }\bigg) = \Tr\bigg(\omega_o\sum_{i\in\Lambda}{M_{j}^{i }}^*a_o M_{j'}^{i }\bigg)
 $$
 $$=\sum_{i\in\Lambda} \Tr(\omega_o{M_{j}^{i}}^*a_oM_{j'}^{i }) = \sum_{i\in \Lambda } \Tr(M_{j'}^{i }\omega_o{M_{j}^{i}}^*a_o) = \Tr(\mathcal{M}_{jj'}(\omega_o)a_o)
 $$
  where $\mathcal{M}_{jj'}$ is given by (\ref{Mjj'}). This completes the proof.
  \end{proof}

\begin{remark}
The maps $\varphi_{jj'}$ and $\psi_{jj'}$ are linear functionals. If the particular case  $j=j'$, the linear functionals $\varphi_{jj}$ and $\psi_{jj}$ define two states, and we have
\begin{equation}\label{phi_j}
  \varphi_{jj}(b) =\frac{1}{\Tr(\rho_j)}\Tr\left(\rho_j \otimes |j\rangle\langle j|b\right).
\end{equation}
and
\begin{equation}\label{psi_j}
  \psi_{jj}(b) = \frac{1}{\Tr(\rho_j)}\sum_{i\in\Lambda}\Tr\left(B_{j}^{i}\rho_j{B_{j}^{i}}^{*}\otimes |i\rangle\langle i|b\right).
\end{equation}
\end{remark}

\subsection{Quantum Markov chain associated with the disordered phase}

In this section, we are going to discuss about QMC associated with the disordered phase of the system. Here, by the disordered phase, it is meant a QMC corresponding to the trivial solution of (\ref{h=E(h)OQRW}).  Indeed, we have the following result.

\begin{theorem}
Assume that Hilbert spaces $\mathcal{H}$ and $\mathcal{K}$ are finite dimensional, then $h_0 = \id_{\mathcal{B}(\mathcal{H})}\otimes\id_{\mathcal{B}(\mathcal{K})}$ defines a  homogeneous boundary condition   $\mathbf{h}_0:= (h_u = \alpha_u(h_0))_{u\in V}$ satisfying (\ref{h=E(h)OQRW}). Moreover for any normalized density matrix $\omega_o\in\mathcal{A}_{o;+}$  the quantum Markov chain $\varphi$ associated with the triplet $(\omega_o, \mathcal{E}, h_0)$ is given by
\begin{equation}\label{varphi_id}
    \varphi(a) = \sum_{j} \Tr\left(\mathcal{M}_{jj}(\omega_o) a_o \right)\prod_{u\in \Lambda_{[1,n]}}\psi_{j,j}(a_{u})
\end{equation}
for every $a=\bigotimes_{u\in\Lambda_n}a_u\in\mathcal{A}_{\Lambda_n}$.

\end{theorem}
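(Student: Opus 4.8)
The plan is to verify the two hypotheses of Theorem \ref{thm_E_expression} for the specific choice $h_u = \alpha_u(h_0)$ with $h_0 = \id_{\mathcal{B}(\mathcal{H})}\otimes\id_{\mathcal{B}(\mathcal{K})} = \id_u$, and then specialize the formula \eqref{phi_id}. First I would check \eqref{h=E(h)OQRW}. With every $h_{(u,\ell)} = \id$, the left-hand side becomes $\sum_{i,j,j'\in\Lambda} M_j^{i*}M_{j'}^{i}\prod_{\ell=1}^{k}\varphi_{j,j'}(\id)$. The key computation is $\varphi_{j,j'}(\id)$: from \eqref{phijj'} one has $\varphi_{j,j'}(\id) = \Tr(\rho_j)^{-1/2}\Tr(\rho_{j'})^{-1/2}\,\Tr(\rho_j^{1/2}\rho_{j'}^{1/2}\otimes|j'\rangle\langle j|)$, and since $\Tr(|j'\rangle\langle j|) = \delta_{j,j'}$, this collapses to $\delta_{j,j'}$ (equal to $1$ when $j=j'$ because $\Tr(\rho_j) = \Tr(\rho_j^{1/2}\rho_j^{1/2})$). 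Hence the product over $\ell$ is $\delta_{j,j'}$, and the left-hand side reduces to $\sum_{i,j} M_j^{i*}M_j^{i}$. Now $M_j^{i*}M_j^{i} = (B_j^{i*}B_j^{i})\otimes|j\rangle\langle j|$, so $\sum_{i,j} M_j^{i*}M_j^{i} = \sum_j\big(\sum_i B_j^{i*}B_j^{i}\big)\otimes|j\rangle\langle j| = \sum_j \id_{\mathcal{B}(\mathcal{H})}\otimes|j\rangle\langle j| = \id_{\mathcal{B}(\mathcal{H})}\otimes\id_{\mathcal{B}(\mathcal{K})} = h_u$, using \eqref{sumBB=1} and finite-dimensionality of $\mathcal{K}$ to guarantee the sum converges and equals the identity. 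This establishes \eqref{h=E(h)OQRW}; the homogeneity of $\mathbf{h}_0$ is immediate since $\alpha_u$ carries $\id$ to $\id$ and the computation is site-independent.

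Next I would verify the normalization \eqref{TrwohoOQRW}: since $h_o = \id_o$ and $\omega_o$ is a normalized density matrix, $\Tr(\omega_o h_o) = \Tr(\omega_o) = 1$. Thus the triplet $(\omega_o,\mathcal{E},h_0)$ satisfies both hypotheses of Theorem \ref{thm_E_expression}, so it defines a quantum Markov chain $\varphi$ on $\mathcal{A}_V$ and the formula \eqref{phi_id} applies.

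Finally I would substitute into \eqref{phi_id}. The boundary factor $\prod_{v\in\Lambda_{n+1}}\varphi_{j,j'}(h^{(v)}) = \prod_{v\in\Lambda_{n+1}}\varphi_{j,j'}(\id)$ equals $\prod_{v}\delta_{j,j'}$, which is $\delta_{j,j'}$ (a single Kronecker factor suffices once $W_{n+1}$ is nonempty, and it is). Therefore the double sum over $j,j'\in\Lambda$ in \eqref{phi_id} collapses to the diagonal $j=j'$, yielding $\varphi(a) = \sum_j \Tr(\mathcal{M}_{jj}(\omega_o)a_o)\prod_{u\in\Lambda_{[1,n]}}\psi_{j,j}(a_u)$, which is exactly \eqref{varphi_id}. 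I do not anticipate a serious obstacle here: the only subtlety is the role of finite-dimensionality, which is needed solely to make $\sum_i B_j^{i*}B_j^{i} = \id$ an honest (rather than merely strongly convergent) identity and to ensure $\id$ is a genuine density-matrix-normalizable boundary term; the rest is bookkeeping with the Kronecker deltas produced by the orthonormality $\langle j|j'\rangle = \delta_{j,j'}$.
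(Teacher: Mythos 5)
Your proposal is correct and follows essentially the same route as the paper: compute $\varphi_{jj'}(\id)=\delta_{jj'}$, use the trace-preservation identity $\sum_{i,j}M_j^{i*}M_j^{i}=\id$ to verify the compatibility equation for $h_0=\id$, and then collapse the double sum in \eqref{phi_id} along the diagonal $j=j'$. The only difference is that you spell out the trivial normalization check \eqref{TrwohoOQRW} and the Kronecker-delta collapse explicitly, which the paper leaves implicit.
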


\begin{proof}
For each $u\in V$ and $j,j'\in \Lambda$, one has
 $$
 \varphi_{jj'}(\id):= \frac{1}{\Tr(\rho_j)^{1/2}\Tr(\rho_{j'})^{1/2}}\Tr\left(\rho_j^{1/2} \rho_{j'}^{1/2} \otimes |j'\rangle\langle j|\right) = \delta_{jj'}
 $$
 From (\ref{Eu}), it follows that
\begin{eqnarray*}
\mathcal{E}_u( \id^{(u)}\otimes\id^{(u,1)}\otimes\cdots\otimes \id^{(u,k)} ) &=&   \sum_{(i,j ,j')\in \Lambda^3}   M_j^{i*}M_{j'}^{i}  \delta_{jj'} \\
&=& \sum_{(i,j )\in \Lambda^2}   M_j^{i*}M_{j }^{i}\\
&\overset{(\ref{sumMij=1})}{=}& \id^{(u)}.
\end{eqnarray*}
 This proves that $\mathbf{h}_0$ defines a boundary condition. Let $\varphi$ be a QMC associated with this boundary condition. From  (\ref{phi_id})  one gets (\ref{varphi_id}).
\end{proof}

\begin{remark}
We notice that QMC given in (\ref{phi_id})  generalizes the Markov chains associated with open quantum random walks studied in \cite{DM19}  to trees.
 In the one dimensional setting, they propose a more general class of QMC associated with OQRW than the ones considered in \cite{DM19} due to the existence of boundary conditions.
\end{remark}

   \section{Phase transition for QMCs on  trees associated with two-state OQRWs}\label{Sect_PT}

   In this section,  we focus on several applications of quantum Markov chains associated open quantum random walks on trees to a phenomena of phase transition, the following definition of phase transition within QMC scheme was first introduced in \cite{MBS161}.

 \begin{definition}\label{def_PT}
   We say that there exists a phase transition for the constructed QMC associated with OQRW if the following conditions are
satisfied:
\begin{enumerate}
\item[(a)] {\sc existence}: The equations \eqref{TrwohoOQRW}, \eqref{h=E(h)OQRW}
have at least two $(u_0,\{h^x\}_{x\in L})$ and $(v_0,\{s^x\}_{x\in
L})$ solutions;
\item[(c)] {\sc not overlapping supports}: there is a projector
$P\in B_L$ such that $\ffi_{u_0,\bh}(P)<\ve$ and
$\ffi_{v_0,\bs}(P)>1-\ve$, for some $\ve>0$.
\item[(b)] {\sc not quasi-equivalence}: the corresponding quantum
Markov chains $\ffi_{u_0,\bh}$ and $\ffi_{v_0,\bs}$ are not quasi
equivalent
\end{enumerate}
Otherwise, we say there is no phase transition.
\end{definition}

In this section, for the sake of simplicity, we reduce the study to the case   $\mathcal{H}= \mathcal{K} = \mathbb{C}^2$. For each $u\in V$, we take  $\mathcal{A}_u = \mathcal{B}(\mathcal{H}\otimes\mathcal{H}) \equiv M_4(\mathbb{C})$ and let $\Lambda = \{1,2\}$. The interactions are given by
  \begin{equation}\label{Bjimodel}
     B_1^1 = \left(
               \begin{array}{cc}
                 a & 0 \\
                 0 & b \\
               \end{array}
             \right), \quad  B_2^1 = \left(
               \begin{array}{cc}
                 0 & 1 \\
                 0 & 0 \\
               \end{array}
             \right), \quad  B_1^2 = \left(
               \begin{array}{cc}
                 c & 0 \\
                 0 & d \\
               \end{array}
             \right),\quad  B_2^2 = \left(
               \begin{array}{cc}
                 1 & 0 \\
                 0 & 0 \\
               \end{array}
             \right)
   \end{equation}
   where  $|a|^2 + |c|^2 = |b|^2 + |d|^2  =1, ac\ne 0$.
   Put \begin{equation}\label{pq}
p=\left(
      \begin{array}{cc}
        1 & 0 \\
       0 & 0 \\
      \end{array}
    \right), \qquad q=\left(
      \begin{array}{cc}
        0 & 0 \\
       0 & 1 \\
      \end{array}
    \right).
\end{equation}
and
$$|1\rangle = \left[\begin{array}{cc} 1 \\ 0  \end{array}\right] , |2\rangle  = \left[\begin{array}{ll}0 \\ 1  \end{array}\right] $$
 Notice that  $(|1\rangle,|2\rangle)$ is an ortho-normal basis of $\mathcal{K}\equiv \mathbb{C}^2$. In  the sequel elements of $\mathcal{B}(\mathcal{H})$ will be denoted by means of $2\times2$ complex matrices, while elments of $\mathcal{B}(\mathcal{K})$ will be written using Dirac notation $|i><j|.$

\subsection{Existence of boundary conditions and their associated quantum Markov chains}
In this section, we are goong to determine all the translation invariant boundary conditions  associated with the considered two-state  OQRW (\ref{Bjimodel}). This means that we describe positive solutions $h\in\mathcal{A}_{u,;+}$ of the compatibility equation (\ref{h=E(h)OQRW}).

\begin{lemma}\label{invariant_boundary_conditions1}
The translation  invariant boundary conditions associated with the two-states OQRW  (\ref{Bjimodel}) are given by:
\begin{equation}\label{hblocks}
    h^{(u)}=\sum_{j,j'\in\Lambda}h_{j,j'}\otimes |j\rangle\langle j'|\in \mathcal{B}(\mathcal{H})\otimes \mathcal{B}(\mathcal{K}),
\end{equation}
where
\begin{equation*}
     h_{j,j'}=\left( \frac{\Tr( \rho_{j'}^{1/2}\rho_j^{1/2} h_{j,j'} )}{\sqrt{\Tr(\rho_j)\Tr(\rho_j')}}\right)^k
\sum_{i\in\Lambda}{B_{j}^{i*}}{B_{j'}^{i}}.
 \end{equation*}
 \end{lemma}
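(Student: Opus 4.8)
The plan is to expand every object in the canonical block form over the orthonormal basis $\{|1\rangle,|2\rangle\}$ of $\mathcal{K}$ and then to match coefficients in the translation-invariant version of the compatibility equation \eqref{h=E(h)OQRW}. First I would write an arbitrary candidate $h^{(u)}\in\mathcal{B}(\mathcal{H})\otimes\mathcal{B}(\mathcal{K})$ as in \eqref{hblocks}, i.e. $h^{(u)}=\sum_{j,j'\in\Lambda}h_{j,j'}\otimes|j\rangle\langle j'|$ with blocks $h_{j,j'}\in\mathcal{B}(\mathcal{H})$; this is no loss of generality since $\{|j\rangle\langle j'|\}_{j,j'\in\Lambda}$ is a linear basis of $\mathcal{B}(\mathcal{K})$. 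Because the sought boundary condition is translation invariant, one has $h_{(u,\ell)}=h:=h^{(u)}$ for each $\ell=1,\dots,k$ after the canonical identifications of the local algebras, so \eqref{h=E(h)OQRW} collapses to
$$\sum_{i,j,j'\in\Lambda} M_j^{i*}M_{j'}^{i}\,\varphi_{j,j'}(h)^{k}=h .$$

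Next I would compute the two ingredients separately. From $M_j^{i}=B_j^{i}\otimes|i\rangle\langle j|$ and $\langle i|i\rangle=1$ one gets $M_j^{i*}M_{j'}^{i}=(B_j^{i*}B_{j'}^{i})\otimes|j\rangle\langle j'|$, hence the left-hand side above equals
$$\sum_{j,j'\in\Lambda}\Big(\sum_{i\in\Lambda}B_j^{i*}B_{j'}^{i}\Big)\,\varphi_{j,j'}(h)^{k}\otimes|j\rangle\langle j'| .$$
For the scalar $\varphi_{j,j'}(h)$ I would insert the block form of $h$ into \eqref{phijj'} and use orthonormality twice, once to evaluate $|j'\rangle\langle j|\cdot|l\rangle\langle l'|=\delta_{jl}|j'\rangle\langle l'|$ and once to take the trace over the $\mathcal{K}$-factor; only the $(j,j')$-block survives and
$$\varphi_{j,j'}(h)=\frac{\Tr\big(\rho_{j'}^{1/2}\rho_j^{1/2}\,h_{j,j'}\big)}{\sqrt{\Tr(\rho_j)\,\Tr(\rho_{j'})}} .$$

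Finally, since $\{|j\rangle\langle j'|\}_{j,j'\in\Lambda}$ is a basis of $\mathcal{B}(\mathcal{K})$, the displayed compatibility equation holds if and only if the coefficients of $|j\rangle\langle j'|$ on the two sides coincide for every pair $(j,j')$, which is exactly
$$h_{j,j'}=\left(\frac{\Tr(\rho_{j'}^{1/2}\rho_j^{1/2}h_{j,j'})}{\sqrt{\Tr(\rho_j)\,\Tr(\rho_{j'})}}\right)^{k}\sum_{i\in\Lambda}B_j^{i*}B_{j'}^{i},$$
the asserted formula; since each step in the chain is an equivalence, this also characterizes all translation-invariant solutions of \eqref{h=E(h)OQRW}. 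I do not anticipate any genuine obstacle: the argument is a direct computation, and the only thing that needs care is the bookkeeping of the $\mathcal{B}(\mathcal{K})$-factors (keeping track of which indices get contracted when multiplying operators of the form $|i\rangle\langle j|$), together with the observation that translation invariance is precisely what turns the product $\prod_{\ell=1}^{k}\varphi_{j,j'}(h_{(u,\ell)})$ in \eqref{h=E(h)OQRW} into the single power $\varphi_{j,j'}(h)^{k}$. Positivity of the $\rho_j$ enters only to make the square roots and the normalization meaningful; isolating among these solutions those with $h^{(u)}\ge 0$ is a separate issue, not needed for the formula itself.
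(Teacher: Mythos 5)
Your argument is correct and follows essentially the same route as the paper: impose translation invariance to turn \eqref{h=E(h)OQRW} into a fixed-point equation, compute $M_j^{i*}M_{j'}^{i}=B_j^{i*}B_{j'}^{i}\otimes|j\rangle\langle j'|$ and evaluate $\varphi_{j,j'}$ on the block decomposition of $h$, then identify coefficients in the basis $\{|j\rangle\langle j'|\}$. The only cosmetic difference is that you start from the already contracted equation of Theorem \ref{thm_E_expression}, whereas the paper re-expands with indices $(i,i')$ and lets the Kronecker delta $\delta_{i,i'}$ emerge from the trace computation; the ordering of $\rho_j^{1/2}\rho_{j'}^{1/2}$ you write matches the lemma's statement and is an ambiguity already present in the paper itself.
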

 \begin{proof}
From (\ref{h=E(h)OQRW}), one has
$$
h^{(u)} = \sum_{i,j,i',j'=1,2}{M_{j}^{i*}}^{(u)}{M_{j'}^{i'}}^{(u)} \prod_{\ell=1}^{k}\Tr(  A_{j}^{i}h^{(u,\ell)} {A_{j'}^{i'}}^{*}).$$
Since the boundary condition is translation invariant (i.e. $h^{(u)} = h $ for all $u\in V$), one gets
\begin{equation}\label{Eq1}
h = \sum_{i,j,i',j'=1,2}{M_{j}^{i*}}{M_{j'}^{i'}} \Tr(  A_{j}^{i}h {A_{j'}^{i'}}^{*})^k.
\end{equation}
Now, using
 \[
 {M_{j}^{i*}}{M_{j'}^{i'}}={B_{j}^{i*}}{B_{j'}^{i'}}\otimes |j\rangle\langle j'|\delta_{i,i'}
 \]
 and
 $$
 \Tr( A_{j}^{i}h {A_{j'}^{i'}}^{*})=\Tr( {A_{j'}^{i'}}^{*}A_{j}^{i}h )   =\frac{1}{\sqrt{\Tr(\rho_j)\Tr(\rho_j')}} \Tr( \rho_{j'}^{1/2}\rho_j^{1/2}\otimes|j'\rangle\langle j| h )\delta_{i,i'},
 $$
 The \eqref{Eq1} becomes
 \begin{equation*}
h = \sum_{i,j,j'\in\Lambda}\left( \frac{\Tr( \rho_{j'}^{1/2}\rho_j^{1/2}\otimes|j'\rangle\langle j| h )}{\sqrt{\Tr(\rho_j)\Tr(\rho_j')}}\right)^k
{B_{j}^{i*}}{B_{j'}^{i}}\otimes |j\rangle\langle j'|.
\end{equation*}
 Finally, by identification with (\ref{hblocks}), we are led to
 \begin{equation*}
     h_{j,j'}=\sum_{i\in\Lambda}\left( \frac{\Tr( \rho_{j'}^{1/2}\rho_j^{1/2}\otimes|j'\rangle\langle j| h )}{\sqrt{\Tr(\rho_j)\Tr(\rho_j')}}\right)^k
{B_{j}^{i*}}{B_{j'}^{i}}=\left( \frac{\Tr( \rho_{j'}^{1/2}\rho_j^{1/2} h_{j,j'} )}{\sqrt{\Tr(\rho_j)\Tr(\rho_j')}}\right)^k
\sum_{i\in\Lambda}{B_{j}^{i*}}{B_{j'}^{i}}.
 \end{equation*}
 \end{proof}

 \begin{theorem}\label{invariant_boundary_conditions2}
 Let $\{E_{jj'},\ 1\ge j,j'\le 4\}$ denotes the canonical basis of $\mathcal{B}(\mathcal{H})\otimes \mathcal{B}(\mathcal{K})\equiv M_4(\mathbb{C})$.
 Write the  translation  invariant boundary conditions associated with the two-states OQRW  (\ref{Bjimodel}) in the form
 \begin{equation*}
      h =\sum_{j,j'=1}^4e_{jj'}E_{jj'}.
 \end{equation*}
 Then
 \begin{equation}\label{sys}
\left\{
     \begin{array}{ccc}
       e_{11}=&e_{33} =& \frac{\Tr(\rho_1\otimes|1\rangle\langle 1| h_{1,1} )^k}{\Tr(\rho_1)^k}\\
       e_{22} =&e_{44} =&\frac{\Tr(\rho_{2}\otimes|2\rangle\langle 2| h_{2,2} )^k}{\Tr(\rho_2)^k}\\
        e_{12} = &\frac{\overline{c}}{\overline{a}}e_{14}=&\overline{c}\frac{\Tr( \rho_{2}^{1/2}\rho_1^{1/2}\otimes|2\rangle\langle 1| h_{1,2} )^k}{\left(\Tr(\rho_1)\Tr(\rho_2)\right)^{k/2}}\\
        e_{21}=&\frac{c}{a}e_{41}=& c\frac{\Tr( \rho_{1}^{1/2}\rho_2^{1/2}\otimes|1\rangle\langle 2| h_{2,1} )^k}{\left(\Tr(\rho_1)\Tr(\rho_2)\right)^{k/2}}\\
        e_{j,j'}=&0,& otherwise
     \end{array}
   \right.
\end{equation}
In particular, for the initial states $\rho_1=\rho_2=|1\rangle\langle 1|$ and $k=2$, there are exactly $4$ non--trivial solutions given by
\begin{equation}\label{hi_model}
h_0 = \id_{\mathbb{M}_4},\quad h_1= \id_{M_2}\otimes |1\rangle\langle 1|,\quad h_2= \id_{M_2}\otimes |2\rangle\langle 2|, \quad
 h_3= h_0+h_c,
\end{equation}
where
\begin{align*}
    h_c=&\frac{1}{\overline{c}}B_2^2\otimes B_2^1
+\frac{1}{c}B_2^2\otimes {B_2^1}^*\quad \hbox{defined only if}\quad |c|=1,
\end{align*}
Moreover,  if for each $i\in \{0,1,2,3\}$ an initial density matrix $\omega_i$ associated with the boundary condition $h_i$ satisfying (\ref{TrwohoOQRW}) then there exists four quantum Markov chains  $\varphi_{\omega_j,h_j}, 0\le i\le3$  and
\begin{equation}\label{phi_hj}
\varphi_{\omega_j,h_j}(a)=  \Tr\left(\mathcal{M}_{jj}(\omega_o) a_o \right)\prod_{u\in \Lambda_{[1,n]}}\psi_{j,j}(a_{u}),  \quad \forall j\in \{1,2\}
\end{equation}
 \end{theorem}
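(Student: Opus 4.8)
The plan is to apply Lemma \ref{invariant_boundary_conditions1} and reduce the abstract compatibility equation \eqref{h=E(h)OQRW} to the explicit linear/algebraic system \eqref{sys} for the matrix entries $e_{jj'}$, and then solve that system in the special case $\rho_1=\rho_2=|1\rangle\langle 1|$, $k=2$. First I would write $h=\sum_{j,j'\in\Lambda}h_{j,j'}\otimes|j\rangle\langle j'|$ as in \eqref{hblocks} and substitute the explicit interaction operators \eqref{Bjimodel}. A direct computation gives $\sum_{i\in\Lambda}B_1^{i*}B_1^i=\mathrm{diag}(|a|^2,|b|^2)+p=\id_{M_2}$ (using $|a|^2+|c|^2=1$ is not quite it — rather one uses $B_1^{1*}B_1^1+B_2^{1*}B_2^1$), and similarly the other three blocks $\sum_i B_j^{i*}B_{j'}^i$ are computed; the off-diagonal blocks produce the factors $\bar c/\bar a$ and $c/a$ appearing in \eqref{sys}. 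Feeding these into the fixed-point relation $h_{j,j'}=\big(\Tr(\rho_{j'}^{1/2}\rho_j^{1/2}h_{j,j'})/\sqrt{\Tr(\rho_j)\Tr(\rho_{j'})}\big)^k\sum_i B_j^{i*}B_{j'}^i$ from Lemma \ref{invariant_boundary_conditions1} and matching coefficients against the canonical basis $\{E_{jj'}\}$ yields the system \eqref{sys}; the entries not listed vanish because the operators $\sum_i B_j^{i*}B_{j'}^i$ are supported only on the indicated matrix units.

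Next I would specialize to $\rho_1=\rho_2=|1\rangle\langle 1|$ and $k=2$. Then $\rho_j^{1/2}=|1\rangle\langle 1|$, $\Tr(\rho_j)=1$, and each scalar $\Tr(\rho_{j'}^{1/2}\rho_j^{1/2}\otimes|j'\rangle\langle j|\,h)$ reduces to a single explicit entry of $h$ (the $(1,1)$-type entry of the relevant block). So \eqref{sys} becomes a finite system of quadratic equations ($k=2$) in the unknowns $e_{11}=e_{33}=:x$, $e_{22}=e_{44}=:y$, $e_{12},e_{14},e_{21},e_{41}$ with $e_{12}=\tfrac{\bar c}{\bar a}e_{14}$, $e_{21}=\tfrac{c}{a}e_{41}$. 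Each diagonal unknown satisfies $x=x^2$ (so $x\in\{0,1\}$) and $y=y^2$ (so $y\in\{0,1\}$), while the off-diagonal unknowns satisfy an equation of the form $e=\mu e^2$ for a constant $\mu$ depending on $a,b,c,d$, which forces $e=0$ unless a resonance condition holds — this is where $|c|=1$ (equivalently $a=0$, excluded by $ac\ne0$, so rather the borderline $|c|\to1$) enters and allows the extra solution $h_c$. Discarding the trivial $x=y=0$ solution leaves exactly the four listed: $h_0=\id_{M_4}$ ($x=y=1$, off-diagonals $0$), $h_1=\id_{M_2}\otimes|1\rangle\langle1|$ ($x=1,y=0$), $h_2=\id_{M_2}\otimes|2\rangle\langle2|$ ($x=0,y=1$), and $h_3=h_0+h_c$ (when $|c|=1$).

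Finally, for the last assertion I would invoke Theorem \ref{thm_E_expression}: for each $j\in\{0,1,2,3\}$, once a normalized $\omega_j\in\mathcal{A}_{o;+}$ with $\Tr(\omega_jh_j)=1$ is chosen, the pair $(\omega_j,h_j)$ satisfies the hypotheses \eqref{TrwohoOQRW}–\eqref{h=E(h)OQRW}, hence the triplet $(\omega_j,(\mathcal{E}_u)_{u\in V},(h_u)_{u\in V})$ with $h_u=\alpha_u(h_j)$ defines a QMC $\varphi_{\omega_j,h_j}$ on $\mathcal{A}_V$ whose value on $a=\bigotimes_{u\in\Lambda_n}a_u$ is given by \eqref{phi_id}. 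For $j\in\{1,2\}$ one observes that $h_j$ is a "diagonal" boundary condition, so only the $j'=j$ term survives in \eqref{phi_id}: indeed $\varphi_{\ell\ell'}(h_j)$ vanishes unless $\ell=\ell'=j$ when $h_j$ is the rank-one-in-$\mathcal{K}$ projector $\id_{M_2}\otimes|j\rangle\langle j|$, collapsing the double sum $\sum_{j,j'}$ to the single term and producing \eqref{phi_hj}.

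The main obstacle I anticipate is the bookkeeping in deriving \eqref{sys}: one must carefully track the tensor factors in $\mathcal{B}(\mathcal{H})\otimes\mathcal{B}(\mathcal{K})$, correctly evaluate the four block sums $\sum_i B_j^{i*}B_{j'}^i$ for all $(j,j')\in\{1,2\}^2$ (the cross terms $B_1^{i*}B_2^i$ mixing the diagonal $B^1$ and the nilpotent/projection $B^2$ are the delicate ones), and verify that no spurious matrix units appear — equivalently, that the operators $\sum_i B_j^{i*}B_{j'}^i$ really are supported on exactly the four positions claimed. Beyond that, the case analysis for the off-diagonal quadratics — showing $e=\mu e^2$ forces $e=0$ generically and isolating precisely the condition $|c|=1$ under which the nonzero branch $h_c$ is admissible and still positive — requires some care, but is elementary once \eqref{sys} is in hand.
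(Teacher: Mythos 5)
Your proposal is correct and follows essentially the same route as the paper: substitute the model \eqref{Bjimodel} into Lemma \ref{invariant_boundary_conditions1} to obtain the system \eqref{sys}, specialize to $\rho_1=\rho_2=|1\rangle\langle 1|$ and $k=2$, solve the resulting quadratic equations using Hermiticity and positivity to isolate exactly the four solutions (positivity being what forces the off-diagonal branch to exist only when $|c|=1$), and then invoke Theorem \ref{thm_E_expression} together with $\varphi_{j,j'}(h_1)=\delta_{j,1}\delta_{1,j'}$, $\varphi_{j,j'}(h_2)=\delta_{j,2}\delta_{2,j'}$ to collapse \eqref{phi_id} to \eqref{phi_hj}. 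Two minor points that do not affect the structure: your parenthetical evaluation of $\sum_i B_1^{i*}B_1^i$ is off (it equals $\mathrm{diag}(|a|^2,|b|^2)+\mathrm{diag}(|c|^2,|d|^2)=\id$, not $\mathrm{diag}(|a|^2,|b|^2)+p$, and the relevant cross sums are $\sum_i B_j^{i*}B_{j'}^{i}$ for $j\ne j'$), and the tension you noticed between $ac\ne 0$ in \eqref{Bjimodel} and the $a=0$, $|c|=1$ case is an inconsistency already present in the paper, whose own proof treats $a\ne 0$ and $a=0$ separately.
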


 \begin{proof}
From Lemma \ref{invariant_boundary_conditions1} and the fact that $|a|^2+ |c|^2=|b|^2+ |d|^2=1$, a straightforward computation leads to \eqref{sys}.
Next, for the initial states $\rho_1=\rho_2=|1\rangle\langle 1|$ and $k=2$, the system \eqref{sys} reduces to,
\begin{equation*}
\left\{
     \begin{array}{ccc}
       e_{11}=&e_{33} =& e_{11}^2\\
       e_{22} =&e_{44} =&e_{22}^2\\
        e_{12} = &\frac{\overline{c}}{\overline{a}}e_{14}=&\overline{c}\,e_{12}^2\\
         e_{21}=&\frac{c}{a}e_{41}=& c\,e_{21}^2\\
         e_{jj'}=&0,& otherwise
     \end{array}
   \right.
\end{equation*}
for $a\ne0$ and
\begin{equation*}
\left\{
     \begin{array}{ccc}
       e_{11}=&e_{33} =& e_{11}^2\\
       e_{22} =&e_{44} =&e_{22}^2\\
        e_{12} =&\overline{c}\,e_{12}^2\\
         e_{21}=& c\,e_{21}^2\\
         e_{jj'}=&0,& otherwise
     \end{array}
   \right.
\end{equation*}
when $a=0$ (and so $|c|=1$).
Therefore, the non--negative Hermitian solutions satisfy
\begin{equation*}
\left\{
     \begin{array}{ccc}
       e_{11}=&e_{33} & \in\{0,1\}\\
       e_{22} =&e_{44} &\in\{0,1\}\\
        e_{12} =&\overline{e_{21}}&\in\{0,\frac{1}{\overline{c}}\}\\
         e_{jj'}=&0,& otherwise
     \end{array}
   \right.
\end{equation*}
This leads to the solutions (\ref{hi_model}). \\
Now since  for $i$ each the solutions $h_i$ is  positive there exists an initial $\omega_i\in \mathcal{A}_o$ such that $\Tr(\omega_ih_i) = 1$. Let $\mathcal{E}_u$ given by (\ref{Eu_def}) then from Theorem \ref{thm_E_expression} the triplet $(\omega_i, (\mathcal{E}_u)_u, h_i)$ defines a quantum Markov chain $\varphi_{\omega_i, h_i}$ on $\mathcal{A}_{V}$.
One can easily check that
\begin{align*}
\varphi_{j,j'}(h_1)=\delta_{j,1}\delta_{1,j'},\quad \varphi_{j,j'}(h_2)=\delta_{j,2}\delta_{2,j'},
\end{align*}
Therefore, according to (\ref{phi_id}) one gets (\ref{phi_hj}).
\end{proof}

 \subsection{Not quasi-equivalence property}
 Recall that two states $\varphi$ and $\psi$ on a $C^*$-algebra $\mathcal{A}$ are said be \textit{quasi-equivalent} if the GNS representations $\pi_\varphi$ and $\pi_\psi$ are quasi-equivalent. The reader is referred to  \cite{BR} for the notion of quasi-equivalence of representations. The following result proposes a criteria  for  the non-quasi equivalence, we are going to use the following result
(see \cite[Corollary 2.6.11]{BR}).
\begin{lemma}\label{br-q}
Let $\varphi_1,$ $\varphi_2$ be two factor states on a quasi-local
algebra $\ga=\cup_{\Lambda}\ga_\Lambda$. The states $\varphi_1,$
$\varphi_2$ are  quasi-equivalent if and only if for any given
$\varepsilon>0$ there exists a finite volume $\Lambda\subset V$
such that $|\varphi_1(a)-\varphi_2(a)|<\varepsilon \|a\|$ for
all $a\in B_{\Lambda^{'}}$ with
$\Lambda^{'}\cap\Lambda=\varnothing$.
\end{lemma}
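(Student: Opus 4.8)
The statement is quoted verbatim from \cite[Corollary 2.6.11]{BR}, so at the level of the paper the ``proof'' is simply that reference; it applies in our setting because the algebras attached to disjoint subsets of $V$ commute, so $\ga$ is norm‑asymptotically abelian in the sense required there. Nonetheless let me outline the argument one would reproduce. The backbone is the dichotomy for factor states: $\varphi_1$ and $\varphi_2$ are either quasi‑equivalent or disjoint, the latter meaning that in the representation $\pi:=\pi_{\varphi_1}\oplus\pi_{\varphi_2}$ the central supports of the two summands are orthogonal projections of $\pi(\ga)''$. The plan is to establish the two implications of the stated equivalence separately.

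For ``quasi‑equivalent $\Rightarrow$ asymptotic agreement'': quasi‑equivalence produces a normal $\ast$‑isomorphism $\pi_{\varphi_1}(\ga)''\to\pi_{\varphi_2}(\ga)''$ intertwining the two states, so that $\varphi_2$ is realized as a normal state $\omega=\sum_{n}\langle\xi_n,\,\cdot\,\xi_n\rangle$ of the von Neumann algebra $N:=\pi_{\varphi_1}(\ga)''$, on which $\varphi_1$ is the vector state of the cyclic vector $\Omega_1$. I would truncate the sum to finitely many terms and approximate each $\xi_n$ in norm by $\pi_{\varphi_1}(b_n)\Omega_1$ with $b_n$ in a fixed finite‑volume subalgebra $\ga_\Lambda$; then, for $a\in\ga_{\Lambda'}$ with $\Lambda'\cap\Lambda=\varnothing$ and $\|a\|\le1$, the commutation $[\ga_\Lambda,\ga_{\Lambda'}]=0$ lets one move the $b_n$ past $a$, and the clustering (mixing at infinity) enjoyed by a factor state over a norm‑asymptotically abelian system forces the residual terms to collapse onto $\varphi_1(a)$ with an error bounded by $\ve\|a\|$, which is exactly the assertion.

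For the converse I would argue by contraposition. If $\varphi_1$ and $\varphi_2$ were disjoint factor states, then — using that for a factor state the asymptotically abelian structure makes $\pi_{\varphi_i}$ map the algebra at infinity $\bigcap_{\Lambda}\overline{\bigcup_{\Lambda'\cap\Lambda=\varnothing}\ga_{\Lambda'}}$ onto the scalars, while disjointness means the two resulting ``values at infinity'' are distinct states there — one extracts, for every prescribed finite $\Lambda$, an element $a$ localized outside $\Lambda$ with $\|a\|\le1$ on which $\varphi_1$ and $\varphi_2$ differ by a fixed positive amount, contradicting the hypothesis. The delicate point is the first implication: the estimate must be uniform over all observables supported far away, and that uniformity is supplied not by a soft functional‑analytic device but precisely by the clustering property forced by factoriality; by contrast the converse is routine once the factor‑state dichotomy and the collapse of the infinity algebra are in hand.
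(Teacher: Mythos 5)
Your proposal matches the paper exactly: the paper offers no proof of this lemma, it simply invokes \cite[Corollary 2.6.11]{BR}, which is precisely what you do. Your supplementary sketch (the factor-state dichotomy quasi-equivalent/disjoint, approximation of the normal realization by local elements plus commutation with observables localized outside $\Lambda$, and the converse via triviality of the algebra at infinity for factor states) is a faithful outline of the standard Bratteli--Robinson argument, so nothing further is needed.
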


\begin{theorem}
Assume that $|c|=1$. If $\mathcal{M}_{11}(\omega_1)=\mathcal{M}_{22}(\omega_2)$ then the quantum Markov chains $\varphi_{h_1, \omega_1}$ and $ \varphi_{h_2, \omega_2}$ are  quasi-equivalent.
\end{theorem}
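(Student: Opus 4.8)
The plan is to prove something stronger than quasi-equivalence, namely that under the stated hypotheses the two states $\varphi_{h_1,\omega_1}$ and $\varphi_{h_2,\omega_2}$ literally coincide on $\mathcal{A}_V$; quasi-equivalence is then automatic, since equal states have identical GNS triples (equivalently, the criterion of Lemma \ref{br-q} holds vacuously, with $|\varphi_{h_1,\omega_1}(a)-\varphi_{h_2,\omega_2}(a)|=0$ for all $a$).

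The first step is to evaluate the local functionals $\psi_{11}$ and $\psi_{22}$ explicitly for the two-state model (\ref{Bjimodel}) with initial states $\rho_1=\rho_2=|1\rangle\langle 1|=p$, so that $\Tr(\rho_1)=\Tr(\rho_2)=1$. From (\ref{psi_j}) together with the elementary identities $B_1^1p{B_1^1}^*=|a|^2p$, $B_1^2p{B_1^2}^*=|c|^2p$, $B_2^1p{B_2^1}^*=0$ and $B_2^2p{B_2^2}^*=p$, one obtains $\psi_{11}(b)=|a|^2\Tr\big((p\otimes|1\rangle\langle 1|)b\big)+|c|^2\Tr\big((p\otimes|2\rangle\langle 2|)b\big)$ and $\psi_{22}(b)=\Tr\big((p\otimes|2\rangle\langle 2|)b\big)$. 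The crucial observation is that the hypothesis $|c|=1$ forces $a=0$ (because $|a|^2+|c|^2=1$); hence the two functionals collapse to one and the same state, $\psi:=\psi_{11}=\psi_{22}$.

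The second step feeds this into the closed expression (\ref{phi_hj}): for any product element $a=\bigotimes_{u\in\Lambda_n}a_u\in\mathcal{A}_{\Lambda_n}$ one has $\varphi_{h_j,\omega_j}(a)=\Tr\big(\mathcal{M}_{jj}(\omega_j)a_o\big)\prod_{u\in\Lambda_{[1,n]}}\psi_{jj}(a_u)$ for $j=1,2$. Combining $\psi_{11}=\psi_{22}$ with the standing hypothesis $\mathcal{M}_{11}(\omega_1)=\mathcal{M}_{22}(\omega_2)$, the two right-hand sides agree, so $\varphi_{h_1,\omega_1}$ and $\varphi_{h_2,\omega_2}$ coincide on every $\mathcal{A}_{\Lambda_n}$, the product elements spanning $\mathcal{A}_{\Lambda_n}$ linearly. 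Since $\bigcup_n\mathcal{A}_{\Lambda_n}=\mathcal{A}_{V;\,loc}$ is norm-dense in $\mathcal{A}_V=\overline{\mathcal{A}_{V;\,loc}}^{C^*}$ and states are norm-continuous, the two states agree on all of $\mathcal{A}_V$, and quasi-equivalence follows.

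The only genuine content is the short matrix computation of $\psi_{11}$ and $\psi_{22}$, so no serious obstacle is anticipated; the one subtlety worth flagging is that the regime $|c|=1$ lies outside the standing assumption $ac\neq0$ — it is precisely the degenerate case $a=0$ under which the extra boundary condition $h_3=h_0+h_c$ of Theorem \ref{invariant_boundary_conditions2} exists, and in which the two ``pure-state'' chains attached to $h_1$ and $h_2$ are forced to merge.
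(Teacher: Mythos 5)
Your proposal is correct and takes essentially the same route as the paper: the hypothesis $|c|=1$ forces $a=0$, whence $\psi_{11}=\psi_{22}$, and combined with $\mathcal{M}_{11}(\omega_1)=\mathcal{M}_{22}(\omega_2)$ the product formula (\ref{phi_hj}) shows the two states have vanishing difference on local elements, which is exactly the paper's computation. The only (minor) difference is that you conclude quasi-equivalence directly from literal equality of the states (after extending by linearity and density), whereas the paper first observes that the states are product, hence factor, states and then invokes the criterion of Lemma \ref{br-q} with zero difference; your shortcut is legitimate and slightly cleaner, and your remark that $|c|=1$ sits outside the standing assumption $ac\neq 0$ is a fair observation about the paper itself.
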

\begin{proof} First from (\ref{phi_hj}) the two states $\varphi_{\omega_1,h_1}$ and $\varphi_{\omega_2,h_2}$ are product states then  they are factor states.
The assumption $|c|=1$ implies $|a|=0$, then  according to (\ref{phi_j}), (\ref{psi_j}) and (\ref{Bjimodel}) the two  states $\psi_{11}$ and $\psi_{22}$ coincide on $\mathcal{B}(\mathcal{H})\otimes\mathcal{B}(\mathcal{K})$.
It follows that,
\begin{equation*}
  |\varphi_{\omega_1, h_1}(a)- \varphi_{\omega_2, h_2}(a)| =| \Tr\left(\left(\mathcal{M}_{11}(\omega_1) -\mathcal{M}_{22}(\omega_2)\right) a_o \right)|\prod_{u\in \Lambda_{[1,n]}}|\psi_{1,1}(a_{u})|
\end{equation*}
Hence, clearly if we have $\mathcal{M}_{11}(\omega_1)=\mathcal{M}_{22}(\omega_2)$, then the quantum Markov chains $\varphi_{h_1, \omega_1}$ and $ \varphi_{h_2, \omega_2}$ are  quasi-equivalent.\\
\end{proof}

 \begin{theorem}
Assume that  $|c|<1$. The quantum Markov chains $\varphi_{h_1, \omega_1}$ and $ \varphi_{h_2, \omega_2}$ are not quasi-equivalent.
 \end{theorem}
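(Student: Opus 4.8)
The plan is to apply the criterion of Lemma~\ref{br-q} in the contrapositive: to show non-quasi-equivalence it suffices to exhibit, for some fixed $\varepsilon_0>0$, an observable $a$ localized arbitrarily far from the root on which $\varphi_{h_1,\omega_1}$ and $\varphi_{h_2,\omega_2}$ differ by more than $\varepsilon_0\|a\|$. Since by (\ref{phi_hj}) both chains are infinite product states over the vertices of $\G^k$ --- with single-site factors $\psi_{1,1}$ at each $u\in\L_{[1,n]}$ for the first chain and $\psi_{2,2}$ for the second --- it is enough to compare the local states $\psi_{1,1}$ and $\psi_{2,2}$ on $\mathcal{B}(\mathcal{H})\otimes\mathcal{B}(\mathcal{K})$ and show they are genuinely different when $|c|<1$; a single well-chosen one-site observable $a_u$ (placed at one vertex $u$ far from $o$, tensored with identities elsewhere) then does the job for every finite $\L$, because $\psi_{j,j}(\id)=1$ means the other factors contribute $1$.

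First I would compute $\psi_{1,1}$ and $\psi_{2,2}$ explicitly from (\ref{psi_j}) together with the matrices (\ref{Bjimodel}) and $\rho_1=\rho_2=|1\rangle\langle1|$. With $\rho_j=|1\rangle\langle1|$ one has $\Tr(\rho_j)=1$ and $\rho_j^{1/2}=p$, so
\begin{equation*}
\psi_{j,j}(b)=\sum_{i\in\{1,2\}}\Tr\!\left(B_j^{i}\,p\,{B_j^{i}}^{*}\otimes|i\rangle\langle i|\;b\right).
\end{equation*}
A direct evaluation gives $B_1^1 p {B_1^1}^{*}=|a|^2 p$, $B_1^2 p {B_1^2}^{*}=|c|^2 p$, while $B_2^1 p {B_2^1}^{*}=0$ and $B_2^2 p {B_2^2}^{*}=p$; hence $\psi_{1,1}(b)=\Tr\big((|a|^2 p\otimes|1\rangle\langle1|+|c|^2 p\otimes|2\rangle\langle2|)\,b\big)$ and $\psi_{2,2}(b)=\Tr\big((p\otimes|2\rangle\langle2|)\,b\big)$. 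When $|c|<1$ (equivalently $|a|>0$) the density matrix of $\psi_{1,1}$ has nonzero weight $|a|^2$ on the $|1\rangle$-sector of $\mathcal{K}$, whereas that of $\psi_{2,2}$ is supported entirely on the $|2\rangle$-sector. Taking $a_u=p\otimes|1\rangle\langle1|$ one gets $\psi_{1,1}(a_u)=|a|^2=1-|c|^2>0$ and $\psi_{2,2}(a_u)=0$, while $\|a_u\|=1$.

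Then, for any finite volume $\L\subset V$, pick a vertex $u\in\L_{[1,n]}$ for some level $n$ with $u\notin\L$ (possible since $\L$ is finite and the tree is infinite), and set $a:=a_u$ viewed as an element of $\mathcal{A}_{\{u\}}\subset\mathcal{A}_{\L'}$ with $\L'\cap\L=\varnothing$. By the product structure (\ref{phi_hj}),
\begin{equation*}
|\varphi_{\omega_1,h_1}(a)-\varphi_{\omega_2,h_2}(a)|=|\psi_{1,1}(a_u)-\psi_{2,2}(a_u)|=1-|c|^2=:\varepsilon_0>0,
\end{equation*}
independently of $\L$ (here I use that the root factor $\Tr(\mathcal{M}_{jj}(\omega_o)a_o)$ equals $\Tr(\mathcal{M}_{jj}(\omega_o)\id)=\Tr(\omega_o)=1$ since $a_o=\id$ when $u\neq o$, and all other factors $\psi_{j,j}(\id)=1$). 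Since $\varphi_{\omega_1,h_1}$ and $\varphi_{\omega_2,h_2}$ are factor states (being product states, as already observed), Lemma~\ref{br-q} forbids quasi-equivalence: the required $\varepsilon<\varepsilon_0$ separation can never be achieved. The main obstacle, and the only place real care is needed, is the explicit computation of $\psi_{1,1}$ and $\psi_{2,2}$ and verifying that for $|c|<1$ their supports in the $\mathcal{K}$-factor are genuinely disjoint — everything after that is a routine application of the product structure and the Bratteli--Robinson criterion. One should also note the degenerate point $|c|=1$ is exactly excluded here (it is the quasi-equivalent case of the previous theorem), so the strict inequality $|c|<1$ is essential.
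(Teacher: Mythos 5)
Your proof is correct and follows essentially the same route as the paper: both invoke the Bratteli--Robinson criterion (Lemma~\ref{br-q}) for the factor (product) states $\varphi_{\omega_1,h_1}$, $\varphi_{\omega_2,h_2}$ and exhibit a single-site projection far from the root on which $\psi_{11}$ evaluates to $|a|^2=1-|c|^2>0$ while $\psi_{22}$ evaluates to $0$. The only cosmetic difference is your choice of observable $p\otimes|1\rangle\langle1|$ instead of the paper's $\id_{M_2}\otimes|1\rangle\langle1|$, which yields the same uniform separation $|a|^2$.
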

 \begin{proof}
 Let us define an element of $\mathcal{A}_{\Lambda_n}$ as follows
 \begin{equation*}
  E_{\Lambda_n} = \sigma^{x_{W_n}(1)}\otimes \id_{\Lambda_{n}\setminus \{x_{W_n}(1)\}}
 \end{equation*}
where
\begin{equation*}
    \sigma^{x_{W_n}(1)} =  \id_{M_2}\otimes p
\end{equation*}
 here $x_{W_n}(1)$ is defined by (\ref{xw}). Then, we have
 \begin{equation*}
     \psi_{11}(\sigma^{x_{W_n}(1)})=\Tr(B_1^1p{B_1^1}^*)=|a|^2\quad{\rm and} \ \psi_{22}(\sigma^{x_{W_n}(1)})=\Tr(B_2^1p{B_2^1}^*)=0.
 \end{equation*}
 On the other hand,
 \begin{equation*}
   \psi_{j,j}(\id_{M_2}\otimes \id_{M_2})= \sum_{i=1}^{2}\Tr\left(p{B_j^i}^*
    B_{j'}^i\right)=\Tr\left(p\sum_{i=1}^{2}{B_j^i}^*
    B_{j}^i\right)=\Tr(p)=1.
\end{equation*}
Hence,
\begin{equation*}
\varphi_{\omega_1,h_1}(E_{\Lambda_n})=  \Tr\left(\mathcal{M}_{11}(\omega_1)  \right)|a|^2=|a|^2\quad{\rm and} \ \varphi_{\omega_2,h_2}(E_{\Lambda_n})=0.
\end{equation*}
 One gets
\begin{eqnarray*}
&& \varphi_{\omega_2,h_2}(E_{\Lambda_n}) = 0\\
&& \varphi_{\omega_1,h_1}(E_{\Lambda_n})=|a|^2.
\end{eqnarray*}
Now, since $|c|<1$, and $\|E_{\Lambda_n}\| = 1$ then $\varepsilon_0: =|a|^2>0 $
$$
|\varphi_{\omega_2,h_2}(E_{\Lambda_n}) - \varphi_{\omega_1,h_1}(E_{\Lambda_n})|= |a|^2 \ge \varepsilon_0\|E_{\Lambda_n}\|
$$
for every $n\ge0.$ Therefore from Lemma \ref{br-q} the two quantum Markov chains $ \varphi_{\omega_1,h_1}$ and $\varphi_{\omega_2,h_2}$ are not quasi-equivalent.
 \end{proof}
\subsection{Not overlapping support}

Recall that two states $\varphi$ and $\psi$ on $\mathcal{A}_{V}$ have not overlapping supports if there is a projector
$P\in B_V$ such that $\varphi(P)<\varepsilon$ and
$\psi(P)>1-\varepsilon$, for some $\varepsilon>0$.

\begin{lemma}\label{rank-1_proj}
Any rank-1 projection in $\mathbb{M}_2(\mathbb{C})$ has the form
\begin{equation}\label{ez}
 p(\varepsilon,z) = \left(\begin{array}{cc}  \varepsilon & z\sqrt{\varepsilon(1-\varepsilon)}\\
 \\
  \overline{z} \sqrt{\varepsilon(1-\varepsilon)}   & 1-\varepsilon \\
 \end{array}\right)
 \end{equation}
 where $\varepsilon\in[0,1], z\in \mathbb{C}$ with $|z| =1$.\\
 \end{lemma}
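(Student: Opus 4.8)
The plan is to use the three defining properties of a rank-one orthogonal projection on $\mathbb{C}^2$: self-adjointness, idempotency $P^2=P$, and the rank condition. Since for a projection one has $\mathrm{rank}\,P=\mathrm{Tr}\,P$, the rank-one requirement is the same as $\mathrm{Tr}\,P=1$. First I would write a general self-adjoint $2\times2$ matrix as
\[
P=\begin{pmatrix}\varepsilon & \beta\\ \overline{\beta} & \delta\end{pmatrix},\qquad \varepsilon,\delta\in\mathbb{R},\ \beta\in\mathbb{C},
\]
and impose $\mathrm{Tr}\,P=\varepsilon+\delta=1$, so that $\delta=1-\varepsilon$.

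Next I would expand the idempotency relation. Computing
\[
P^2=\begin{pmatrix}\varepsilon^2+|\beta|^2 & (\varepsilon+\delta)\beta\\ (\varepsilon+\delta)\overline{\beta} & |\beta|^2+\delta^2\end{pmatrix}
\]
and substituting $\varepsilon+\delta=1$, the off-diagonal equations of $P^2=P$ become tautological, while the $(1,1)$ entry gives $\varepsilon^2+|\beta|^2=\varepsilon$, that is
\[
|\beta|^2=\varepsilon(1-\varepsilon),
\]
and the $(2,2)$ entry gives the same identity $|\beta|^2=\delta(1-\delta)=(1-\varepsilon)\varepsilon$. Since $|\beta|^2\ge 0$, this forces $\varepsilon(1-\varepsilon)\ge 0$, hence $\varepsilon\in[0,1]$. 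Writing $\beta$ in polar form as $\beta=z\sqrt{\varepsilon(1-\varepsilon)}$ with $z\in\mathbb{C}$, $|z|=1$ — uniquely determined by $\beta$ when $\varepsilon\in(0,1)$, and with $z$ arbitrary (say $z=1$) when $\varepsilon\in\{0,1\}$, since then $\beta=0$ — we recover exactly the form $p(\varepsilon,z)$ in \eqref{ez}.

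Finally I would record the converse as a one-line check: for any $\varepsilon\in[0,1]$ and $|z|=1$, the matrix $p(\varepsilon,z)$ is manifestly self-adjoint, has trace $1$, and satisfies $p(\varepsilon,z)^2=p(\varepsilon,z)$ by the computation above, so it is indeed a rank-one projection. There is essentially no real obstacle in this lemma; the only point deserving a word of care is the degenerate case $\varepsilon\in\{0,1\}$, where the off-diagonal entries vanish and the parameter $z$ becomes irrelevant, so the parametrization fails to be injective there — which does not affect the statement, as only surjectivity of $(\varepsilon,z)\mapsto p(\varepsilon,z)$ onto the rank-one projections is asserted.
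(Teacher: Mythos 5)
Your proof is correct: the characterization of a rank-one projection by self-adjointness, idempotency and $\mathrm{Tr}\,P=1$, followed by solving the entrywise equations, is the standard argument, and your handling of the degenerate cases $\varepsilon\in\{0,1\}$ is fine. The paper in fact states this lemma without any proof, so your write-up simply supplies the routine verification that the authors omitted.
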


 \begin{remark}\label{M_jjexpression}
  Notice that,
 \begin{eqnarray*}
 \Tr(\mathcal{M}_{jj}(w))& =& \sum_{i\in\Lambda}\Tr(M_{j}^{i}w{M_{j}^{i}}^*) = \sum_{i\in\Lambda}\Tr({M_{j}^{i}}^*M_{j}^{i}w)\\[2mm]
  &=& \Tr(\sum_{i\in\Lambda}{B_{j}^{i}}^*B_{j}^{i}\otimes|j\rangle\langle j|w)\\
  & =&\Tr(\id_{\mathcal{H}}\otimes|j\rangle\langle j|w) .
\end{eqnarray*}
 Writing $w$ as a block matrix:
 \begin{equation*}
     w=\sum_{j,j'\in\lambda}\omega_{j,j'}\otimes|j\rangle\langle j'|,
 \end{equation*}
 one gets
 \begin{equation*}
     \Tr(\mathcal{M}_{jj}(w))=\Tr(\omega_{j,j}).
 \end{equation*}
 More generally, one has
 \begin{equation*}
     {M}_{jj'}(w)=\sum_{i\in\Lambda}M_{j'}^{i} w {M_{j}^{i}}^* = \sum_{i\in\Lambda}B_{j'}^{i} \omega_{j',j} {B_{j}^{i}}^*\otimes|i\rangle\langle i|.
 \end{equation*}
 \end{remark}
  For each $n\in\mathbb{N}$, we denote
$$
p_n(\varepsilon,z) = \bigotimes_{u\in\Lambda_n}(p(\varepsilon,z)\otimes I)^{(u)} \in\mathcal{A}_{\Lambda_n}
$$

 \begin{theorem}\label{notovelapping}
 The quantum Markov chains $ \varphi_{\omega_1,h_1}$ and $\varphi_{\omega_2,h_2}$ have not overlapping supports if and only if $\Tr(\omega_1 p\otimes \id_{M_2})\ne1$.
 \end{theorem}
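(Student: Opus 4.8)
The plan is to exploit the product form of the two chains given by (\ref{phi_id})--(\ref{phi_hj}). Since $\varphi_{j,j'}(h_1)=\delta_{j,1}\delta_{1,j'}$ and $\varphi_{j,j'}(h_2)=\delta_{j,2}\delta_{2,j'}$, for every local $a=\bigotimes_{u\in\Lambda_n}a_u$ one has
\[\varphi_{\omega_1,h_1}(a)=\Tr\big(\mathcal{M}_{11}(\omega_1)a_o\big)\prod_{u\in\Lambda_{[1,n]}}\psi_{1,1}(a_u),\qquad \varphi_{\omega_2,h_2}(a)=\Tr\big(\mathcal{M}_{22}(\omega_2)a_o\big)\prod_{u\in\Lambda_{[1,n]}}\psi_{2,2}(a_u),\]
so each $\varphi_{\omega_j,h_j}$ is an infinite product state over $\Gamma^k_+$, with ``boundary'' density $\mathcal{M}_{jj}(\omega_j)$ at the root $o$ and the one--site density of $\psi_{j,j}$ at every other vertex.

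For sufficiency I would test against $P:=p_n(1,z)=\bigotimes_{u\in\Lambda_n}(p\otimes\id_{M_2})^{(u)}$ (note $p(1,z)=p$ in Lemma \ref{rank-1_proj}). Using $\sum_iB_j^{i*}B_j^{i}=\id$ and the matrices (\ref{Bjimodel}) one computes $\sum_iB_1^{i*}pB_1^{i}=p$ and $\sum_iB_2^{i*}pB_2^{i}=\id$; together with (\ref{psi_j}) and Remark \ref{M_jjexpression} this gives $\psi_{1,1}(p\otimes\id_{M_2})=\psi_{2,2}(p\otimes\id_{M_2})=1$, $\Tr(\mathcal{M}_{22}(\omega_2)(p\otimes\id_{M_2}))=\Tr(\omega_2h_2)=1$ and $\Tr(\mathcal{M}_{11}(\omega_1)(p\otimes\id_{M_2}))=\Tr(\omega_1(p\otimes|1\rangle\langle1|))$. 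Because $\Tr(\omega_1h_1)=1$ forces $\omega_1=(\omega_1)_{1,1}\otimes|1\rangle\langle1|$, the last quantity equals $\Tr(\omega_1(p\otimes\id_{M_2}))=:t$. Hence $\varphi_{\omega_2,h_2}(P)=1$ and $\varphi_{\omega_1,h_1}(P)=t$ for every $n$, and if $t\ne1$ (so $t<1$) any $\varepsilon\in(t,1)$ witnesses $\varphi_{\omega_1,h_1}(P)<\varepsilon$ and $\varphi_{\omega_2,h_2}(P)>1-\varepsilon$: not overlapping supports.

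For necessity I argue by contraposition. If $t=\Tr(\omega_1(p\otimes\id_{M_2}))=1$ then, $(\omega_1)_{1,1}$ being a state with $\Tr((\omega_1)_{1,1}p)=1=\Tr(p)$, we get $(\omega_1)_{1,1}=p$, so $\omega_1=|1\rangle\langle1|\otimes|1\rangle\langle1|$ and $\mathcal{M}_{11}(\omega_1)=\sum_iB_1^{i}pB_1^{i*}\otimes|i\rangle\langle i|$ coincides with the one--site density of $\psi_{1,1}$; thus $\varphi_{\omega_1,h_1}=\bigotimes_{u}\psi_{1,1}$ is a pure product state with support the rank--one projection $\Omega=\bigotimes_u(|1\rangle\langle1|\otimes|2\rangle\langle2|)^{(u)}$. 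Writing $\varphi_{\omega_2,h_2}$ as a product state, its density on each $\mathcal{A}_{\Lambda_n}$ dominates $\sigma_{11}\,\Omega|_{\Lambda_n}$, where $\sigma_{11}>0$ is the $(1,1)$--block entry of $\omega_2$; hence $\mathrm{supp}\,\varphi_{\omega_1,h_1}\le\mathrm{supp}\,\varphi_{\omega_2,h_2}$ and in fact $\sigma_{11}\,\varphi_{\omega_1,h_1}(P)\le\varphi_{\omega_2,h_2}(P)\le\varphi_{\omega_1,h_1}(P)+(1-\sigma_{11})$ for every projection $P$. These two inequalities rule out the simultaneous estimates $\varphi_{\omega_1,h_1}(P)<\varepsilon$, $\varphi_{\omega_2,h_2}(P)>1-\varepsilon$ (and the same with the chains exchanged) once $\varepsilon$ is small enough in terms of $\sigma_{11}$, i.e. the two chains have overlapping supports.

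I expect the delicate point to be this last step: the estimate $\sigma_{11}\varphi_{\omega_1,h_1}(P)\le\varphi_{\omega_2,h_2}(P)\le\varphi_{\omega_1,h_1}(P)+(1-\sigma_{11})$ has to be obtained for arbitrary projections $P$ of the quasi--local algebra $\mathcal{A}_V$, not just local ones; the plan is to reduce to $P\in\mathcal{A}_{\Lambda_n}$ by norm density and then use the purity of $\varphi_{\omega_1,h_1}$ together with the fact that, for the model (\ref{Bjimodel}) in the regime $\psi_{1,1}=\psi_{2,2}$ (i.e. $|c|=1$), the two product states differ only at the root. One must also record the harmless non--degeneracy assumption $\sigma_{11}>0$ on the root block of $\omega_2$, without which the two root states could be orthogonal.
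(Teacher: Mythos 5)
Your sufficiency half is essentially the paper's own argument: the witness is the same projector $p_n(1,z)=\bigotimes_{u\in\Lambda_n}(p\otimes\id_{M_2})^{(u)}$, and your computations $\psi_{jj}(p\otimes\id_{M_2})=1$, $\varphi_{\omega_2,h_2}(P)=\Tr(\omega_2h_2)=1$, $\varphi_{\omega_1,h_1}(P)=\Tr\big(\omega_1(p\otimes\id_{M_2})\big)$ reproduce the $\varepsilon=1$ case of the paper's formula $\varphi_{\omega_j,h_j}(p_n(\varepsilon,z))=\varepsilon^{2^{n+1}-2}\Tr\big(\mathcal{M}_{jj}(\omega_j)\,p(\varepsilon,z)\otimes\id_{M_2}\big)$. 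That part is correct.

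The necessity half does not work as written. First, when $t=1$ one gets $\mathcal{M}_{11}(\omega_1)=D_{\psi_{11}}=p\otimes\big(|a|^2|1\rangle\langle1|+|c|^2|2\rangle\langle2|\big)$, which has rank \emph{two} under the standing assumption $ac\ne0$ of (\ref{Bjimodel}); so $\varphi_{\omega_1,h_1}$ is not pure and its one-site support is $p\otimes\id_{M_2}$, not the rank-one $|1\rangle\langle1|\otimes|2\rangle\langle2|$ you claim. Second, the domination $D_{\varphi_{\omega_2,h_2}}\ge\sigma_{11}\Omega$ is doubly vacuous: the normalization $\Tr(\omega_2h_2)=1$ with $h_2=\id_{M_2}\otimes|2\rangle\langle2|$ forces the $(1,1)$-block of $\omega_2$ to vanish, so $\sigma_{11}=0$; and away from the root the one-site density of $\psi_{22}$ is rank one, hence cannot dominate any positive multiple of the rank-two support of $\psi_{11}$ (any such constant would in any case decay like $\sigma^{|\Lambda_n|}$ rather than stay bounded below). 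Third, your reduction to the regime $|c|=1$ contradicts $ac\ne0$ and is not a hypothesis of the theorem; for $|c|<1$ one has $\psi_{11}\ne\psi_{22}$, and testing on $Q_n=\bigotimes_{u\in\Lambda_{[1,n]}}(\id_{M_2}\otimes|2\rangle\langle2|)^{(u)}$ gives $\varphi_{\omega_1,h_1}(Q_n)=|c|^{2(2^{n+1}-2)}\to0$ while $\varphi_{\omega_2,h_2}(Q_n)=1$, so the contrapositive you are aiming at cannot be established by this route. For comparison, the paper's proof of the ``only if'' direction is itself confined to the family $p_n(\varepsilon,z)$: it shows that for $\varepsilon<1$ both values tend to $0$ and only $\varepsilon=1$ can serve as a witness within that family; it never attempts the global claim over all projections that your contraposition requires.
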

\begin{proof}
One has
$$\varphi_{\omega_1,h_1}(p_n(\varepsilon,z))= \varepsilon^{2^{n+1}-2} \Tr\left(\mathcal{M}_{11}(\omega_1) p(\varepsilon,z)\otimes\id_{M_2} \right),
$$
and
$$
\varphi_{\omega_2,h_2}(p_n(\varepsilon,z))= \varepsilon^{2^{n+1}-2} \Tr\left(\mathcal{M}_{22}(\omega_2) p(\varepsilon,z)\otimes\id_{M_2} \right),
$$
Let $P_n=p_n(\varepsilon,z)$ be a rank-1 projector.
\begin{itemize}
    \item If $\varepsilon<1$, then one has
\begin{equation*}
\lim_{n\rightarrow\infty}\varphi_{\omega_1,h_1}(P_n)=\lim_{n\rightarrow\infty}\varphi_{\omega_2,h_2}(P_n)=0.
\end{equation*}
    \item If $\varepsilon=1$, then using Remark \ref{M_jjexpression} one has
\begin{equation*}
\varphi_{\omega_2,h_2}(P_n)=\Tr(\omega_2h_2)=1.
\end{equation*}
On the other hand one has
\begin{equation*}
\varphi_{\omega_1,h_1}(P_n)=\Tr(\omega_1 p\otimes \id_{M_2})\ne1=\varphi_{\omega_2,h_2}(P_n).
\end{equation*}
\end{itemize}
\end{proof}
The above results leads to the following concluding result.

\begin{theorem}
if $|c|<1$, then there exists a phase transitions for the quantum Markov chains associated with the two-state OQRW (\ref{Bjimodel}).
\end{theorem}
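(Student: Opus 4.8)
The plan is to verify, one at a time, the three requirements in Definition~\ref{def_PT} for a single well-chosen pair of QMCs; no new analytic ingredient is needed beyond the three theorems already proved in this section. Take the two boundary conditions $h_1=\id_{M_2}\otimes|1\rangle\langle1|$ and $h_2=\id_{M_2}\otimes|2\rangle\langle2|$ furnished by Theorem~\ref{invariant_boundary_conditions2} in the regime $\rho_1=\rho_2=|1\rangle\langle1|$, $k=2$, and fix the initial density matrices $\omega_1=q\otimes|1\rangle\langle1|$ and $\omega_2=q\otimes|2\rangle\langle2|$, with $q$ as in \eqref{pq}. Then $\Tr(\omega_i h_i)=\Tr(q)\,\Tr(|i\rangle\langle i|)=1$, so \eqref{TrwohoOQRW} holds, and Theorem~\ref{thm_E_expression} yields two QMCs $\varphi_{\omega_1,h_1}$, $\varphi_{\omega_2,h_2}$ of the product form \eqref{phi_hj}. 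Since $h_1\neq h_2$, the pairs $(\omega_1,\mathbf{h}_1)$ and $(\omega_2,\mathbf{h}_2)$ are two distinct solutions of \eqref{TrwohoOQRW}--\eqref{h=E(h)OQRW}, which is condition (a) of Definition~\ref{def_PT}.

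For condition (b), under the hypothesis $|c|<1$ the theorem established above for that case asserts precisely that $\varphi_{\omega_1,h_1}$ and $\varphi_{\omega_2,h_2}$ are not quasi-equivalent: being product (hence factor) states, they are separated, through Lemma~\ref{br-q}, by the observables $E_{\Lambda_n}$ localized arbitrarily far from the root, on which the two states differ by $|a|^2=1-|c|^2>0$; so (b) is immediate. For condition (c), Theorem~\ref{notovelapping} states that $\varphi_{\omega_1,h_1}$ and $\varphi_{\omega_2,h_2}$ have non-overlapping supports exactly when $\Tr(\omega_1\,p\otimes\id_{M_2})\neq 1$. For our choice one computes $\Tr(\omega_1\,p\otimes\id_{M_2})=\Tr(qp)\,\Tr(|1\rangle\langle1|)=0\neq1$, so the hypothesis of Theorem~\ref{notovelapping} is met, and concretely the rank-one projections $P_n=p_n(1,z)$ realize the separation. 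Having verified (a), (b), (c), a phase transition occurs in the sense of Definition~\ref{def_PT}, which is the claim.

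The only point requiring attention at this last stage is a compatibility one: the same $\omega_1$ must simultaneously satisfy the normalization \eqref{TrwohoOQRW} against $h_1$ and the inequality feeding Theorem~\ref{notovelapping}, while $\omega_2$ is otherwise free; the explicit choice above settles this with no difficulty. Thus there is no genuine obstacle here — the substantive work (solving \eqref{h=E(h)OQRW} to obtain the four boundary conditions in Theorem~\ref{invariant_boundary_conditions2}, constructing the distinguishing observables for non-quasi-equivalence, and performing the support computation with the projections $p_n(\varepsilon,z)$ in Theorem~\ref{notovelapping}) has already been done, and the present theorem is a corollary-style synthesis of these results.
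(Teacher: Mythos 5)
Your proposal is correct and follows essentially the same route as the paper: fix explicit initial states satisfying \eqref{TrwohoOQRW} for the boundary conditions $h_1,h_2$ of Theorem \ref{invariant_boundary_conditions2}, then verify conditions (a)--(c) of Definition \ref{def_PT} by invoking the non-quasi-equivalence theorem for $|c|<1$ (the $|a|^2$ gap on the observables $E_{\Lambda_n}$) and Theorem \ref{notovelapping} with $\Tr(\omega_1(p\otimes\id_{M_2}))=0\neq 1$. The only deviation is your choice $\omega_2=q\otimes|2\rangle\langle 2|$ instead of the paper's $p\otimes|2\rangle\langle 2|$, which is immaterial since all the quantities involved depend on $\omega_2$ only through $\Tr(\omega_2 h_2)=1$.
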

\begin{proof}
Take
\begin{equation*}
    \omega_1=q\otimes|1\rangle\langle1|\quad {\rm and}\ \omega_2=p\otimes|2\rangle\langle2|.
\end{equation*}
Using (\ref{Mjj'}), (\ref{Bjimodel}) and (\ref{Mij}) we find
\begin{equation*}
    \mathcal{M}_{11}(\omega_1)=q\otimes\Big(|b|^2|1\rangle\langle1| + |d|^2|2\rangle\langle2|\Big)
\end{equation*}
and
\begin{equation*}
   \mathcal{M}_{22}(\omega_2)=p\otimes|2\rangle\langle2|.
\end{equation*}
On the one hand, by Lemma \ref{br-q}, the two quantum Markov chains $ \varphi_{\omega_1,h_1}$ and $\varphi_{\omega_2,h_2}$ are not quasi-equivalent since
\[
|\varphi_{\omega_2,h_2}(E_{\Lambda_n}) - \varphi_{\omega_1,h_1}(E_{\Lambda_n})|= |a|^2
\]
for every $n\ge0.$
On the other hand, clearly (by Theorem \ref{notovelapping}) the two quantum Markov chains $ \varphi_{\omega_1,h_1}$ and $\varphi_{\omega_2,h_2}$ are not overlapping supports since
\[\Tr(\omega_1( p\otimes \id_{M_2}))=0\ne1.\]
\end{proof}

\section{Mean entropy for QNCs  on trees associated with OQRWs}

This section is devoted to the computation of mean entropies for the two quantum Markov chains $\varphi_{\omega_1, h_1}$ and $\varphi_{\omega_2, h_2}$ given by $(\ref{phi_hj})$. In the notations of section  \ref{Sect_PT}, the algebra $\mathcal{A}_x \equiv \mathbf{M}_4(\mathbb{C})$. Let $\varphi$ be a state on $\mathcal{A}_V$.
For each bounded region $I$ of the vertex set $V$, the density matrix of the restriction $\varphi\lceil_{\mathcal{A}_I}$ will be denoted by $D^{\varphi}_I$. The von Neumann entropy of   $\varphi\lceil_{\mathcal{A}_I}$  is defined to be
\begin{equation}
    S(\varphi)  =  - \Tr(D_\varphi\log D_\varphi)
\end{equation}
In \cite{MS21}  the mean entropy for quantum Markov states on trees was defined  as follows
     \begin{equation}\label{mean_entropy_tree}
     s(\varphi) := \lim_{n\to\infty}\frac{1}{|\Lambda_n|}S(\varphi\lceil_{\mathcal{A}_{\Lambda_n}})
 \end{equation}
 We use the same formulae for the quantum Markov chains $\varphi_{\omega_1, h_1}$ and $\varphi_{\omega_2, h_2}$.
 \begin{theorem} For each $j\in\{1,2\},$
 let $\varphi_{\omega_j, h_j}$ be given by (\ref{phi_hj}). The mean entropy of the quantum Markov chin $\varphi_{\omega_j, h_j}$ coincides with the von Neumann entropy of the state $\psi_{jj}$:
 \begin{equation}
     s(\varphi_{\omega_j,h_j}) = S(\psi_{jj})
 \end{equation}
 where $\psi_{jj}$ is as in (\ref{psi_jjprime}). Therefore, one has
\begin{equation*}
     s(\varphi_{\omega_1,h_1}) = -2\bigg(|a|^2\log|a| + |c|^2\log|c|\bigg)
 \end{equation*}
 and
 \begin{equation*}
     s(\varphi_{\omega_2,h_2}) = 0
 \end{equation*}
 \end{theorem}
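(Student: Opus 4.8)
The plan is to reduce the mean entropy computation to a von Neumann entropy of a single-site state by exploiting the product structure of $\varphi_{\omega_j,h_j}$ established in Theorem~\ref{invariant_boundary_conditions2}, namely formula (\ref{phi_hj}). First I would observe that (\ref{phi_hj}) expresses $\varphi_{\omega_j,h_j}$ on $\mathcal{A}_{\Lambda_n}$ as a tensor product: the root factor carries the density matrix $\mathcal{M}_{jj}(\omega_o)$ and each vertex $u\in\Lambda_{[1,n]}$ carries the state $\psi_{jj}$. Thus the restriction $\varphi_{\omega_j,h_j}\lceil_{\mathcal{A}_{\Lambda_n}}$ has density matrix $D_{\Lambda_n}^{\varphi_{\omega_j,h_j}} = \mathcal{M}_{jj}(\omega_o)\otimes\bigotimes_{u\in\Lambda_{[1,n]}}D_{\psi_{jj}}$, where $D_{\psi_{jj}}$ is the density matrix of $\psi_{jj}$ on $\mathbf{M}_4(\mathbb{C})$. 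By additivity of the von Neumann entropy over tensor products, $S(\varphi_{\omega_j,h_j}\lceil_{\mathcal{A}_{\Lambda_n}}) = S(\mathcal{M}_{jj}(\omega_o)) + (|\Lambda_n|-1)\,S(\psi_{jj})$. Dividing by $|\Lambda_n|$ and letting $n\to\infty$, since $|\Lambda_n|\to\infty$ and $S(\mathcal{M}_{jj}(\omega_o))$ is a fixed finite number (the algebra is finite dimensional), the root contribution vanishes in the limit and one obtains $s(\varphi_{\omega_j,h_j}) = S(\psi_{jj})$.

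Next I would compute $S(\psi_{jj})$ explicitly in the two cases using the formula (\ref{psi_j}), i.e.\ $\psi_{jj}(b) = \frac{1}{\Tr(\rho_j)}\sum_{i\in\Lambda}\Tr\bigl(B_j^i\rho_j {B_j^i}^\ast\otimes|i\rangle\langle i|\,b\bigr)$, together with the initial states $\rho_1=\rho_2=|1\rangle\langle1|$ and the interactions (\ref{Bjimodel}). For $j=1$: since $\rho_1 = |1\rangle\langle1| = p$, one has $B_1^1\rho_1{B_1^1}^\ast = |a|^2 p$ and $B_1^2\rho_1{B_1^2}^\ast = |c|^2 p$, so the density matrix of $\psi_{11}$ is $p\otimes\bigl(|a|^2|1\rangle\langle1| + |c|^2|2\rangle\langle2|\bigr)$, a rank-two matrix with eigenvalues $|a|^2$ and $|c|^2$ (recall $|a|^2+|c|^2=1$). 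Hence $S(\psi_{11}) = -|a|^2\log|a|^2 - |c|^2\log|c|^2 = -2\bigl(|a|^2\log|a| + |c|^2\log|c|\bigr)$. For $j=2$: $B_2^1\rho_2{B_2^1}^\ast = 0$ while $B_2^2\rho_2{B_2^2}^\ast = |1\rangle\langle1|$, so the density matrix of $\psi_{22}$ is $p\otimes|1\rangle\langle1|$, a rank-one projection, whence $S(\psi_{22}) = 0$.

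The main obstacle is really just a bookkeeping point rather than a deep one: I must make sure the tensor-product decomposition of $D_{\Lambda_n}^{\varphi_{\omega_j,h_j}}$ is genuinely a product across \emph{all} vertices of $\Lambda_n$, including that $\mathcal{M}_{jj}(\omega_o)$ is indeed a legitimate (positive, trace-one) density matrix on $\mathcal{A}_o$ and that $\psi_{jj}$ is a state on each copy of $\mathbf{M}_4(\mathbb{C})$ — both facts are already recorded in the excerpt (the remark after Theorem~\ref{thm_E_expression} notes $\psi_{jj}$ is a state, and $\Tr(\mathcal{M}_{jj}(\omega_o)) = \Tr(\omega_o h_j) = 1$ by the compatibility relations). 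One subtlety worth a sentence: in (\ref{phi_hj}) the boundary term $\prod_{v\in\Lambda_{n+1}}\varphi_{jj}(h^{(v)})$ from the general formula (\ref{phi_id}) has collapsed to $1$ because $\varphi_{jj}(h_j)=1$ (as checked in the proof of Theorem~\ref{invariant_boundary_conditions2}), so there is genuinely no leftover boundary contribution to track. Finally I would remark that the general identity $s(\varphi_{\omega_j,h_j}) = S(\psi_{jj})$ does not even require finite-dimensionality of $\mathcal{K}$ beyond what guarantees the entropies are finite, but stating it in the finite-dimensional setting of Section~\ref{Sect_PT} suffices for the explicit formulas.
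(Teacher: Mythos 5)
Your proposal is correct and follows essentially the same route as the paper: use the product structure of the restricted density matrix $D_{\Lambda_n}=\mathcal{M}_{jj}(\omega_o)\otimes\bigotimes_{u\in\Lambda_{[1,n]}}D_{\psi_{jj}}$, invoke additivity of the von Neumann entropy (the paper does this by expanding $\log D_{\Lambda_n}$), let the bounded root contribution vanish after dividing by $|\Lambda_n|$, and then compute $D_{\psi_{jj}}$ explicitly from (\ref{psi_j}) and (\ref{Bjimodel}). The only slip is cosmetic: the density matrix of $\psi_{22}$ is $p\otimes|2\rangle\langle 2|$ rather than $p\otimes|1\rangle\langle 1|$ (the $|i\rangle\langle i|$ factor carries the index $i=2$ of $B_2^2$), but it is a rank-one projection either way, so $S(\psi_{22})=0$ and the conclusion stands.
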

 \begin{proof}

 From (\ref{phi_hj}) the density matrices of $\varphi_{\omega_j, h_j}$ is given as follows
 \begin{equation}
     D_{\varphi_{\omega_j, h_j}\lceil_{\mathcal{A}_{\Lambda_n}}} = \mathcal{M}_{jj}(\omega_1)^{(o)} \otimes\bigotimes_{x\in\Lambda_{[1,n]}} D_{\psi_{jj}}^{(x)}
 \end{equation}
 One finds
 \begin{equation}\label{logDphi}
 \log  D_{\varphi_{\omega_j, h_j}\lceil_{\mathcal{A}_{\Lambda_n}}} =  \log \mathcal{M}_{jj}(\omega_1)^{(o)} + \sum_{x\in \Lambda_{[1,n]}}\log D_{\psi_{jj}}^{(x)}
 \end{equation}
 One has
 \begin{eqnarray*}
 S\left(\varphi_{\omega_j, h_j}\lceil_{\mathcal{A}_{\Lambda_n}}\right) &= & -\Tr\left(D_{\varphi_{\omega_j, h_j}\lceil_{\mathcal{A}_{\Lambda_n}}}\log D_{\varphi_{\omega_j, h_j}\lceil_{\mathcal{A}_{\Lambda_n}}} \right)\\
 &=& - \varphi_{\omega_j, h_j}\left( \log D_{\varphi_{\omega_j, h_j}\lceil_{\mathcal{A}_{\Lambda_n}}}\right)\\
 &\overset{(\ref{logDphi})}{=} & -  \varphi_{\omega_j, h_j}\left(\log \mathcal{M}_{jj}(\omega_1)^{(o)}\right) - \sum_{x\in \Lambda_{[1,n]}}\varphi_{\omega_j, h_j}\left(\log D_{\psi_{jj}}^{(x)} \right)\\
 &\overset{(\ref{psi_jjprime})}{=} & -\Tr\left(\mathcal{M}_{jj}(\omega_1)^{(o)}\log \mathcal{M}_{jj}(\omega_1)^{(o)}\right) -  \sum_{x\in \Lambda_{[1,n]}}\psi_{jj}\left(\log D_{\psi_{jj}}^{(x)} \right)\\
 &= & -\Tr\left(\mathcal{M}_{jj}(\omega_1)^{(o)}\log \mathcal{M}_{jj}(\omega_1)^{(o)}\right) -  |\Lambda_{[1,n]}|\psi_{jj}\left(\log D_{\psi_{jj}}\right)\\
 &=&  S(\varphi_o) + (2^{n+1}-2) S(\psi_{jj})
 \end{eqnarray*}
It follows that the mean entropy of the quantum Markov chains $\varphi_{\omega_j, h_j}$ is given by
\begin{eqnarray*}
    s(\varphi_{\omega_j, h_j}) &=&\lim_{n\to \infty} \frac{ S\left(\varphi_{\omega_j, h_j}\lceil_{\mathcal{A}_{\Lambda_n}}\right)}{|\Lambda_n|}\\
    &=&  \lim_{n\to \infty} \left( -\frac{S(\varphi_o)}{2^{n+1}-1} - \frac{2^{n+1}-2}{2^{n+1}-1}S(\psi_{jj}) \right) \\
    &=& S(\psi_{jj})
\end{eqnarray*}
Now,  from (\ref{psi_jjprime}) one can see that the density matrix of $\psi_{jj}$ is defined by
 $$
 D_{\psi_{jj}} = \frac{1}{\Tr(\rho_j)} \sum_{i\in \Lambda}B_{j}^{i}\rho_jB_{j}^{i*}\otimes |i\rangle\langle i|=\sum_{i\in \Lambda}B_{j}^{i}pB_{j}^{i*}\otimes |i\rangle\langle i|
 $$
since $\rho_j=|1\rangle\langle1|=p$. It follows that
$$
 D_{\psi_{11}} =p \otimes \bigg(|a|^2 |1\rangle\langle 1|+   |c|^2   |2\rangle\langle 2|\bigg)
 $$
 and
 $$
 D_{\psi_{22}} = q \otimes |2\rangle\langle 2|
 $$
 Hence we get
 \begin{equation*}
     s(\varphi_{\omega_1,h_1})=  - \Tr(D_{\psi_{11}}\log D_{\psi_{11}}) = -2\bigg(|a|^2\log|a| +|c|^2\log|c|\bigg)
 \end{equation*}
 and
 \begin{equation*}
     s(\varphi_{\omega_2,h_2}) =  - \Tr(D_{\psi_{22}}\log D_{\psi_{22}}) = 0
 \end{equation*}
 That completes the proof.
\end{proof}

\section*{Acknowledgments}
The authors gratefully acknowledge Qassim University, represented by the Deanship of Scientific Research, on the
financial support for this research under the number (10173-cba-2020-1-3-I)
during the academic year 1442 AH / 2020 AD.

\end{document}